\newcommand{\vect}[1]{\mathbf{#1}}
\newcommand{\RR}{\mathbb{R}}
\newcommand{\FF}{\mathbb{F}}
\newcommand{\PP}{\mathbb{P}}
\newcommand\norm[1]{\left\lVert#1\right\rVert}
\newcommand\abs[1]{\left\lvert#1\right\rvert}
\newcommand\parens[1]{\left(#1\right)}
\newcommand\angles[1]{\langle#1\rangle}
\newcommand*{\pr}[2][]{\PP\ifx\\\left[#1\right]\\\else_{#1}\fi \left[#2\right]}
\newcommand*{\EE}[2][]{\mathbb{E}\ifx\\\left[#1\right]\\\else_{#1}\fi \left[#2\right]}
\newtheorem{defin}{Definition}[section]
\newtheorem{lemma}[defin]{Lemma}
\newtheorem{theorem}[defin]{Theorem}
\newtheorem{prop}[defin]{Proposition}
\newtheorem{corollary}[defin]{Corollary}
\theoremstyle{definition}
\newtheorem{remark}[defin]{Remark}
\newtheorem{example}[defin]{Example}
\begin{document}
	\begin{titlepage}
		\title{Are You Smarter Than a Random Expert?\\ The Robust Aggregation of Substitutable Signals}
		
		\author{Eric Neyman, Tim Roughgarden}
		
		\date{\today}
		\maketitle
		\thispagestyle{empty}
		
		\begin{abstract}
			The problem of aggregating expert forecasts is ubiquitous in fields as wide-ranging as machine learning, economics, climate science, and national security. Despite this, our theoretical understanding of this question is fairly shallow. This paper initiates the study of forecast aggregation in a context where experts' knowledge is chosen adversarially from a broad class of information structures. While in full generality it is impossible to achieve a nontrivial performance guarantee, we show that doing so is possible under a condition on the experts' information structure that we call \emph{projective substitutes}. The projective substitutes condition is a notion of informational substitutes: that there are diminishing marginal returns to learning the experts' signals. We show that under the projective substitutes condition, taking the average of the experts' forecasts improves substantially upon the strategy of trusting a random expert. We then consider a more permissive setting, in which the aggregator has access to the prior. We show that by averaging the experts' forecasts and then \emph{extremizing} the average by moving it away from the prior by a constant factor, the aggregator's performance guarantee is substantially better than is possible without knowledge of the prior. Our results give a theoretical grounding to past empirical research on extremization and help give guidance on the appropriate amount to extremize.
		\end{abstract}
	\end{titlepage}

	\section{Introduction}
Suppose that you wish to estimate how much the GDP of the United States will grow next year: perhaps you are making financial decisions and want to know whether to expect a downturn. You don't personally know much about the question --- just that the historical average rate of GDP growth has been 3\% --- but on the internet you find several forecasts made by machine learning models. One model predicts 3.5\% growth next year; another predicts 1.5\%; a third predicts a downturn: -1\% growth. How might you take this information into account and turn it into one number: your best guess, all things considered?

Because of the ubiquity of its applications, forecast aggregation is of critical importance to many fields: machine learning, operations research, economics, climate science, epidemiology, and national security, to name a few. Despite this, the theoretical tools we have for understanding this problem are fairly limited.

What should we ask of a framework for comparing competing aggregation methods?
First, for each fixed setup, it should enable the quantitative
assessment of
an aggregation method based on its performance relative to a natural
benchmark (analogous to, for example, assessing an online learning
algorithm via its regret with respect to the best fixed action in
hindsight).
Second,
the framework should be general: rather than evaluating an aggregation method based on its performance under a particular assumption about the experts' information sets, it should assess the method based on its performance over a broad range of possible setups.

We can model each expert as having partial information over the state of the world, and thus the quantity being estimated (which we denote $Y$). The experts' information sets may overlap in essentially arbitrary ways, which we formalize using the notion of an \emph{information structure}.

No aggregation method is simultaneously optimal for every information structure. As such, it is natural to ask which aggregation method optimizes worst-case performance over a broad class of information structures. This is the approach we take, because it has the aforementioned advantages: it assesses aggregation methods based on their performance, but does so broadly rather than under specific assumptions.

\subsection{Our Results}
Without any conditions on the experts' information structure, no aggregation strategy can achieve a nontrivial performance guarantee.\footnote{For example, consider the ``XOR information structure" in which two experts receive independent, random bits, and $Y$ is their XOR. See Section~\ref{sec:improving} for further discussion.} In this work, we optimize for worst-case performance over all information structures that satisfy a condition that we call \emph{projective informational substitutes}. Roughly speaking, experts' signals are informational substitutes if the value of learning an additional signal has diminishing marginal returns. The projective substitutes condition is a particular formalization of this concept that builds on an alternative notion (``weak substitutes") introduced in \cite{cw16} --- a notion that proves inadequate for our purposes.\footnote{In Section~\ref{sec:random_expert} we introduce a ``secret sharing" information structure that shows that with no further assumptions beyond the weak substitutes condition, no aggregation strategy achieves a better performance guarantee than the strategy of choosing a random expert to trust.} Intuitively, substitutable signals allow for effective aggregation because signal interactions are more predictable, so it is possible to infer more from forecasts alone without knowing the information structure.

We consider two settings: the \emph{prior-free setting} and the \emph{known prior setting}. In the prior-free setting, an aggregator receives only the experts' forecasts as input; in the known prior setting, the aggregator additionally knows the prior, i.e. the overall expected value of $Y$ (3\% in our leading example). In both settings, the expert must then output an aggregate forecast.

One simple strategy is to pick an expert at random and ``aggregate" by outputing that expert's forecast. In expectation, this aggregate performs at least as well as the prior; and under the weak substitutes condition of \cite{cw16}, the strategy does at least $1/k$ times as well as someone who knew every expert's signal and the information structure, where $k$ is the number of experts.\footnote{We judge the performance of an aggregation strategy based on its improvement over the prior.} That is, choosing a random expert attains an \emph{approximation ratio} of $1/k$. Unfortunately, we exhibit an information structure that satisfies weak substitutes but on which no aggregation strategy can outperform a $1/k$-approximation (even in the known prior setting).

However, under our slightly stronger assumption of projective substitutes, it is possible to improve upon this $1/k$ baseline. Thus, while one can ask about robust aggregation in many different settings, the projective substitutes condition appears to be a sweet spot: it allows for a broad array of possible information structures while still allowing at nontrivial performance guarantees in both the prior-free and known prior settings. These results are summarized in Figure~\ref{fig:results}.\\

\begin{figure}
	\centering
	\includegraphics[scale=0.8]{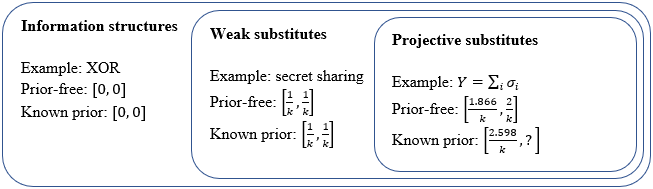}
	\caption{We are interested in the innermost setting, where nontrivial positive results are possible. Intervals given for each setting indicate positive and negative results, respectively (and are stated as asymptotic approximations for the innermost setting).}
	\label{fig:results}
\end{figure}

\noindent In Section~\ref{sec:prior_free}, we investigate the prior-free setting. In this setting, we show that one can improve upon the random expert strategy under the projective substitutes condition, and also that our bound is tight for two experts and close to tight in the general case.
\begin{itemize}[leftmargin=*]
	\item \textbf{(Theorem~\ref{thm:prior_free_positive}) By taking the average of the experts' forecasts, it is possible to attain an approximation ratio of at least $\mathbf{(1 + \sqrt{3}/2)/k - O \parens{1/k^2} \approx 1.866/k}$.} In other words, knowing nothing about an information structure other than the fact that it satisfies the projective substitutes condition, one can significantly improve upon the $1/k$-approximation guarantee of choosing a random expert.
	
	To prove this result, we use the projective substitutes condition to show that one of two things must be true: either (a) the experts' forecasts are (in expectation) fairly different from each other, or (b) the forecasts are somewhat accurate, meaning that they improve substantially upon the prior. In case (a), averaging the experts' forecasts guarantees substantial improvement upon a random forecast; in case (b), even though averaging the forecasts does not improve substantially upon a random forecast, a random forecast already substantially outperforms the prior.
	
	\item \textbf{(Theorem~\ref{thm:prior_free_negative}) There is no aggregation method that attains an approximation ratio of more than $\mathbf{2/k - 1/k^2}$.} In fact, we show that our negative result holds for the following simple class of information structures (which satisfy projective substitutes): each expert receives a numerical signal $\sigma_i$ drawn independently from a normal distribution with some mean $\mu$ and variance $1$, and the quantity being forecast is $Y := \sum_i \sigma_i$. (Thus, this negative result holds for every restriction on information structures that accommodates this class of examples, not only for projective substitutes.)
	
	For any such information structure, expert $i$'s forecast is $Y_i := \sigma_i + (k - 1)\mu$. This means that $Y = \sum_i Y_i - k(k - 1)\mu$. Thus, an aggregator who knew the mean $\mu$ could predict $Y$ perfectly; however, in the prior-free setting, the aggregator does not know $\mu$.
	
	We show (in a formal sense) that for this one-parameter family of information structures, the best strategy is to average the forecasts. Put otherwise, despite the amount of knowledge that the aggregator has (namely, everything other than $\mu$), the aggregator cannot do better than reporting the average of the signals. This strategy attains a $2/k - 1/k^2$ approximation guarantee on these information structures.
	
	\item \textbf{(Theorem~\ref{thm:prior_free_negative_n2}) In the case of $\mathbf{k = 2}$ experts, the positive result given in Theorem~\ref{thm:prior_free_positive} --- which in this case is $\mathbf{\frac{3 + \sqrt{7}}{8} \approx 0.706}$ --- is tight (i.e. no aggregation method can do better).} We exhibit two information structures that, despite having rather different optimal aggregation strategies, the aggregator cannot distinguish between.
\end{itemize}

\noindent In Section~\ref{sec:known_prior}, we investigate the known prior setting. We prove that it is possible to improve upon the aforementioned guarantee of the prior-free setting by \emph{extremizing} the average of the experts' beliefs, i.e. moving it away from the prior. Previous work on forecasting has demonstrated an empirical case for extremization \cite{sbfmtu14} \cite{bmtsu14} \cite{su15}. However, there is comparatively little work grounding extremization from a theoretical standpoint. This work shows that in our robust aggregation setup, the aggregator benefits from extremizing the average. Additionally, our results suggest a particular amount by which to extremize. Specifically, we show that:
\begin{itemize}[leftmargin=*]
	\item \textbf{(Theorem~\ref{thm:known_prior_positive}) By extremizing the average of the experts' forecasts by a particular constant factor that depends on $\mathbf{k}$,\footnote{This factor approaches $\sqrt{3}$ as $k$ approaches infinity.} it is possible to attain an approximation ratio of at least $\mathbf{(3\sqrt{3}/2)/k - O(1/k^2) \approx 2.598/k}$.} The key approach is to introduce a new degree of freedom --- the \emph{extremization factor} --- to optimize over. Setting this factor equal to $1$ returns the optimization problem solved in the proof of Theorem~\ref{thm:prior_free_positive}, but allowing the factor to take on arbitrary values allows us to substantially improve the bound.
	
	\item \textbf{(Theorem~\ref{thm:known_prior_conditional_negative}) It is not possible to attain an approximation ratio of more than $\mathbf{4k/(k + 1)^2}$ using the above approach of averaging and extremizing by a constant factor.} This result holds even in the case when each expert receives a signal from the standard Gaussian distribution, and all that is unknown is whether the signals are fully correlated or fully independent. In the first case, it is optimal not to extremize; in the second, it is optimal to extremize a lot. We prove that no fixed extremization constant attains a better approximation guarantee than stated on both information structures.
	
	\item \textbf{(Theorem~\ref{thm:known_prior_negative_n2}) In the case of $\mathbf{k = 2}$ experts, the positive result given in Theorem~\ref{thm:known_prior_positive} --- which in this case is $\mathbf{\frac{7\sqrt{7} - 17}{2} \approx 0.760}$ --- is tight (i.e. no aggregation method can do better).} As with Theorem~\ref{thm:prior_free_negative_n2}, the idea is to exhibit two information structures that are indistinguishable to the expert but have substantially different optimal aggregation strategies.
\end{itemize}

While we have stated some of our results as asymptotic approximations in $k$ for convenience, these asymptotics are not our focus. Instead, our goal is to understand which \emph{methods} of aggregation work well in which settings. When is the tried and true method of averaging forecasts about optimal, and when is it possible to attain a substantial improvement? Figure~\ref{fig:results_small_k} summarizes our findings for $k = 2 \dots 7$ (which are plausible values for many of the applications that motivate this work). The high-level takeaways are:
\begin{itemize}
	\item Under the projective substitutes condition, it is possible to improve substantially upon selecting a random expert simply by averaging the experts' forecasts.
	\item When only the forecasts are known, no technique can substantially improve upon averaging.
	\item But when the prior is known, extremizing appropriately is substantially better than averaging, and in fact better than any possible aggregation strategy that does not use the prior.
\end{itemize}

\begin{figure}
	\centering
	\begin{tabular}{c|c|cc|c}\
		&Weak subs.&\multicolumn{2}{c}{Proj. subs. (prior-free)} & Proj. subs. (known prior)\\
		k & Pos.\ \& neg.\ & Averaging (positive) & Negative & Extremizing (positive)\\
		\hline
		2&0.500&0.706&0.706&0.760\\
		3&0.333&0.520&0.556&0.596\\
		4&0.250&0.409&0.438&0.488\\
		5&0.200&0.336&0.360&0.412\\
		6&0.167&0.285&0.306&0.356\\
		7&0.143&0.248&0.265&0.314
	\end{tabular}
	\caption{For $k = 2 \dots 7$: the approximation ratio guaranteed by choosing a random expert, which is the best one can do under weak substitutes, followed by some of our positive and negative results under the projective substitutes condition.}
	\label{fig:results_small_k}
\end{figure}

\subsection{Related Work}
\paragraph{Axiomatic and Bayesian approaches} Prior work on forecast aggregation tends to come in two flavors: axiomatic and Bayesian. The axiomatic approach to forecast aggregation seeks to define desirable properties of aggregation methods, and asks which methods satisfy these properties. For example, \cite{aw80} show that linear pooling (i.e. averaging) is the only aggregation method that satisfies both the unanimity preservation and eventwise independence axioms. Other work on the axiomatic approach includes \cite{gen84} and \cite{dl14}.

One drawback of the axiomatic approach is that aggregation methods preferred by these approaches rarely work as characterizations of how one ought to aggregate predictions under specific (natural) information structures. By contrast, the Bayesian approach seeks to create models of experts' information and analyzes the correct way for a Bayesian aggregator to take the information into account. Prior work on aggregation with a Bayesian flavor includes \cite{winkler81}, \cite{mz93}, \cite{sbfmtu14}, \cite{fck15}, and \cite{lgjw17}.

However, most prior work on Bayesian aggregation takes a \emph{parametric} view of aggregation: experts are modeled as Bayesians whose signals are drawn from a particular distribution with one or more parameters, and an aggregation method is chosen to optimize an objective function within the model. For example, \cite{lgjw17} considers a model in which experts have some shared information and additionally receive private samples that are independent and identically distributed according to an exponential family. While this is a natural model, it rests on specific assumptions about the information structure, and results in this model may not generalize well.

\paragraph{Robust aggregation} Our model can be thought of as a hybrid of the axiomatic and Bayesian approaches, blending what we believe to be the most appealing parts of each. We draw from the Bayesian approach in using information structures as a formalism for the experts' knowledge, whereas the goal of producing a single role that satisfies some global property (in our case, worst-case optimality) is reminiscent of the axiomatic approach. Our model is non-parametric: rather than assuming a parameterized family of distributions, we seek to optimize our aggregation method against a broad class of information structures.

Our work is most similar to \cite{abs18}, which likewise seeks to optimize an aggregation method against an adversarially selected information structure. However, the class of information structures that we consider is broader: while they consider the case of two Blackwell-ordered experts (i.e. two experts, an unknown one of whom knows strictly more than the other) and two conditionally independent experts, we consider an arbitrary number of experts from any information structure that satisfies the projective substitutes condition. \cite{lr20} have a similar model, but are also quite restrictive in terms of the information structures they consider. \cite{dil21} uses a similar model, but more distantly related: they consider arbitrary decision problems but restrict the aggregator to a finite number of decisions (just two decisions for many of their results) --- in our setting this would mean forcing the aggregator to choose among finitely many output choices.

Another important difference is that most of the previously mentioned work specifically considers the aggregation of \emph{probabilistic} forecasts, whereas we are interested in aggregating expected value forecasts for arbitrary real-valued quantities.

\paragraph{Extremization} Past empirical work has demonstrated that extremizing the average of the experts' forecasts often improves the aggregate forecast \cite{sbfmtu14} \cite{bmtsu14} \cite{su15}. \cite{bmtsu14} explains this by noting that any individual forecaster should incorporate the fact that they may be missing useful information available to other forecasters, and that simply averaging forecasts would fail to incorporate the full wisdom of the crowd. Studying aggregation in the context of information structures as well, \cite{su15} notes that the forecast average \emph{lacks resolution}, meaning that its variance is provably too low. However, the authors note that the information structure framework is in full generality ``too abstract to be applied in practice" and instead optimize their extremization factor for multivariate Gaussian distributions. On the other hand, our approach of robust aggregation is able not only to provide a theoretical justification for extremization, but also to suggest a particular factor of extremization (Theorem~\ref{thm:known_prior_positive}), thus giving rigorous backing to what had previously been justified either by empirical heuristics or by optimization over a quite narrow class of information structures.

\paragraph{Informational substitutes} In this work we explore notions
of informational substitutes and their relation to forecast
aggregation. The concept of informational substitutes was introduced
in \cite{bhk13} and refined by \cite{cw16}.\footnote{We recommend the
  ArXiv version of \cite{cw16} for the most up-to-date introduction to
  informational substitutes.} We build on \cite{cw16} by introducing
our own notion of substitutes, projective substitutes. Contemporaneous
work \cite{fnw21} explores informational substitutes, though in the
context of experts exchanging information to reach agreement.
	\section{Preliminaries} \label{sec:prelims}
\subsection{Information Structures}
We consider a set $\Omega$ of possible states of the world, with a probability distribution $\PP$ over these states. Additionally there are $k$ experts labeled $1 \dots k$. Each expert $i$ learns the value of a random variable $\sigma_i: \Omega \to S_i$; we call $\sigma_i$ expert $i$'s \emph{signal} and $S_i$ their \emph{signal set}. We let $S := S_1 \times \dots \times S_k$ denote the space of signal tuples. An expert's signal can be thought of as partial information about the true state of the world $\omega \in \Omega$. Additionally, we consider a random variable $Y: \Omega \to \RR$ with $\EE{Y^2} < \infty$ whose value we wish to estimate. We use the term \emph{information structure} to refer to the tuple $\mathcal{I} := (\Omega, \PP, S, Y)$.

In the case that $Y$ is determined by the experts' signals (i.e. for every profile of signals there is only one possible value of $Y$), we may summarize $\mathcal{I}$ with two tables, one showing $Y$ as a function of $(\sigma_1, \dots, \sigma_k)$ and the other showing $\pr{(\sigma_1, \dots, \sigma_k)}$. For example, consider the following information structure.

\[
\left\{Y = \begin{tabular}{c|cc}
	&$\sigma_2 = 0$&$\sigma_2 = 1$\\
	\hline
	$\sigma_1 = 0$&0&1\\
	$\sigma_1 = 1$&1&0
\end{tabular}
\qquad \PP =
\begin{tabular}{c|cc}
	&$\sigma_2 = 0$&$\sigma_2 = 1$\\
	\hline
	$\sigma_1 = 0$&1/4&1/4\\
	$\sigma_1 = 1$&1/4&1/4
\end{tabular}\right\}
\]

We refer to this as the \emph{XOR information structure}, because it describes the following situation: two experts receive independent, uniformly random bits, and $Y$ is the XOR of these bits.\footnote{Note that when we write $\sigma_1 = 0$, $0$ is merely a label for a particular signal value in $S_1$; we could have labeled expert 1's signals $a$ and $b$ instead. Our choice of labels reflects the intuition of the XOR structure describing $Y$.}

Consider a subset of experts $A \subseteq [k]$. We define $Y_A := \EE{Y \mid \sigma_i: i \in A}$. That is, $Y_A$ is the random variable whose value is the expectation of $Y$ given the signals of the experts in $A$. If $A = \{i\}$, we write $Y_i$ in place of $Y_{\{i\}}$. With the XOR information structure, $Y_1$ is a random variable whose value is always $\frac{1}{2}$ (since $\EE{Y \mid \sigma_1 = 0} = \EE{Y \mid \sigma_1 = 1} = \frac{1}{2}$). On the other hand, $Y_{\{1, 2\}}$ has value $0$ when $(\sigma_1, \sigma_2)$ is either $(0, 0)$ or $(1, 1)$, and $1$ otherwise. The random variable $Y_\emptyset$ is simply $\EE{Y}$, i.e. the unconditional expected value of $Y$. $Y_\emptyset$ can be thought of as the prior on $Y$.

\subsection{Improving on the Prior} \label{sec:improving}
In this work, we will be taking the perspective of an aggregator who receives estimates of $Y$ from each expert. The aggregator then produces an estimate $X$ of $Y$ which is as accurate as possible. In particular, we care about the \emph{robust} estimation of $Y$: a single estimate that is simultaneously as accurate as possible across all possible information structures (satisfying the projective substitutes condition, which we discuss below).

We assess an aggregator's performance by the squared distance between their estimate $X$ and the true value $Y$. That is, the aggregator wishes to minimize $\EE{(Y - X)^2}$. We define the function $v(X)$ as follows to reflect the \emph{quality of $X$ as an estimate of $Y$}.

\begin{defin}
	Given an information structure $\mathcal{I} = (\Omega, \PP, S, Y)$ and a random variable $X$, we define
	\[v(X) := \EE{(Y - \EE{Y})^2} - \EE{(Y - X)^2}.\]
\end{defin}

Thus, $v(\cdot)$ is the improvement in loss provided by $X$ over an uninformed estimate. For example, $v(Y_\emptyset) = 0$ and $v(Y) = \EE{(Y - \EE{Y})^2}$ is the variance of $Y$. We cannot possibly hope for any $X$ such that $v(X) > Y_{[k]}$, since $Y_{[k]}$ is the estimate produced by knowing all information that exists. This motivates comparing $v(X)$ against the benchmark $v(Y_{[k]})$.

However, the aggregator does not know the underlying information structure --- only the experts' estimates. Specifically, we will consider two settings:
\begin{enumerate}[label=(\arabic*)]
	\item The \emph{prior-free setting}: the aggregator's estimate is only based on the experts' estimates. That is, $X$ is a function of $Y_1, \dots, Y_k$.
	\item The \emph{known prior setting}: the aggregator knows the experts' estimates and the prior. That is, $X$ is a function of $Y_1, \dots, Y_k$ and $\EE{Y}$.
\end{enumerate}
That is, $X$ is a function of $k$ real numbers (or $k + 1$, in the known prior setting); we call this function the aggregator's \emph{aggregation strategy}. The aggregator's goal is to come up with an aggregation strategy that performs well across information structures (we formalize this below).\\

In the known prior setting, the aggregator can report $X = \EE{Y}$; then $v(X) = 0$ (we call this the \emph{trivial aggregation strategy}). In both settings it is possible to do at least as well by reporting e.g. $X = Y_1$. On the other hand, without any conditions on the information structure, it is not always possible to do better: in the XOR information structure (above), the aggregator is guaranteed to receive $Y_1 = Y_2 = \frac{1}{2}$, and it is impossible for the aggregator to improve upon simply reporting the prior of $\frac{1}{2}$.

\subsection{Informational Complements and Substitutes}
Intuitively, in the XOR information structure, the aggregator is impeded by the fact that the experts' signals are \emph{informational complements}: each signal (and the estimate it produces) is not valuable by itself, but the two signals are valuable when taken together.

The opposite of informational complements is \emph{informational substitutes}. \cite{cw16} discuss a few different notions of substitutes, of which the most relevant one for us is \emph{weak informational substitutes}. This notion is a property of $v(\cdot)$ as a set function on the subsets of $[k]$. The function is guaranteed to be monotone --- that is, $v(Y_A) \le v(Y_B)$ for $A \subseteq B$ --- whereas the weak substitutes condition additionally requires submodularity.\footnote{The authors define informational substitutes with respect to an arbitrary value function; in our case, the value function is the function $v$.}

\begin{defin}[\cite{cw16}] \label{def:weak_subs}
	We say that $\mathcal{I} = (\Omega, \PP, S, Y)$ for $k$ experts satisfies \emph{weak informational substitutes} if for all $A \subseteq B \subseteq [k]$ and $i \in [k]$, we have
	\begin{equation} \label{eq:weak_subs}
		v(Y_{A \cup \{i\}}) - v(Y_A) \ge v(Y_{B \cup \{i\}}) - v(Y_B).
	\end{equation}
\end{defin}

That is, $\mathcal{I}$ satisfies weak substitutes if the marginal improvement in the ability to estimate $Y$ from learning a signal $i$ is a decreasing function of the subset of signals already known. The XOR information structure does not satisfy weak substitutes: we have $v(\emptyset) = v(Y_1) = v(Y_2) = 0$ and $v(Y_{[2]}) = \frac{1}{4}$, so the marginal value of learning $\sigma_2$ is \emph{higher} if $\sigma_1$ is already known.

By contrast, consider the information structure in which Alice and Bob are given the \emph{same} input bit $b$, and $Y = b$. In this case, the marginal value of Bob's signal is zero if Alice's signal is already known, so this information structure satisfies weak substitutes.

\subsection{Random Expert Strategy Under Weak Substitutes} \label{sec:random_expert}
It is not surprising that with no knowledge of the information structure, it is impossible to outperform the trivial strategy. Perhaps it would be possible to do better with only a coarse constraint on the information structure. It is not \emph{a priori} obvious that this should be possible. However, if $\mathcal{I}$ satisfies weak substitutes, then it is possible to outperform the trivial strategy by reporting a random expert's belief:

\begin{prop} \label{prop:random_expert}
	Suppose that $\mathcal{I} = (\Omega, \PP, S, Y)$ satisfies weak substitutes, and let $X$ be equal to $Y_i$ for a uniformly random $i \in [k]$ (we call this the \emph{random expert strategy}). Then $v(X) \ge \frac{1}{k} v(Y_{[k]})$.
\end{prop}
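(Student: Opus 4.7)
The plan is to first rewrite $v(X)$ in a convenient form and then apply the weak substitutes inequality along a telescoping chain.

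First, I would unpack the definition of $v(X)$ when $X = Y_I$ for $I$ uniform on $[n]$ and independent of $\omega$. Using the tower property to expand $\mathbb{E}[(Y-X)^2]$ by conditioning on $I$, one finds
\[
v(X) = \mathbb{E}[(Y - \mathbb{E}[Y])^2] - \frac{1}{n}\sum_{i=1}^{n} \mathbb{E}[(Y - Y_i)^2] = \frac{1}{n}\sum_{i=1}^{n} v(Y_i).
\]
So the proposition reduces to showing $\sum_{i=1}^n v(Y_i) \ge v(Y_{[n]})$.

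Next I would telescope $v(Y_{[n]})$ along the chain $\emptyset \subset \{1\} \subset \{1,2\} \subset \cdots \subset [n]$, using $v(Y_\emptyset) = 0$:
\[
v(Y_{[n]}) = \sum_{i=1}^n \bigl(v(Y_{[i]}) - v(Y_{[i-1]})\bigr).
\]
Now I invoke weak substitutes (Definition~\ref{def:weak_subs}) with $A = \emptyset$ and $B = [i-1]$, which gives
\[
v(Y_i) = v(Y_{\{i\}}) - v(Y_\emptyset) \ge v(Y_{[i-1] \cup \{i\}}) - v(Y_{[i-1]}) = v(Y_{[i]}) - v(Y_{[i-1]}).
\]
Summing over $i$ yields $\sum_i v(Y_i) \ge v(Y_{[n]})$, and combining with the first display completes the proof.

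I do not expect any real obstacle here: the only subtlety is correctly averaging $v$ over the choice of random expert (making sure the randomness in $I$ is folded into the definition of the random variable $X$ the right way), after which weak substitutes does all the work via a standard submodularity-telescoping argument.
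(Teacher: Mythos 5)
Your proposal is correct and matches the paper's proof essentially verbatim: both compute $v(X) = \frac{1}{n}\sum_i v(Y_i)$ and then apply weak substitutes with $A=\emptyset$, $B=[i-1]$ along the telescoping chain to get $\sum_i v(Y_i) \ge v(Y_{[n]})$. No issues.
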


\begin{proof}
	For $j \in [k]$, plug $A = \emptyset, B = [j - 1], i = j$ into Equation~\ref{eq:weak_subs}. Adding these $k$ inequalities (and recalling that $v(\emptyset) = 0$), we find that $\sum_{j = 1}^k v(Y_j) \ge v(Y_{[k]})$. Therefore, for $X$ as in the proposition statement, we have
	\[v(X) = \frac{1}{k} \sum_{i = 1}^k v(Y_i)  \ge \frac{1}{k} v(Y_{[k]}),\]
	as desired.
\end{proof}

Put otherwise, the random expert strategy attains an \emph{approximation ratio} of $1/k$.

\begin{defin}
	Given an information structure $\mathcal{I} = (\Omega, \PP, S, Y)$ with $k$ experts, the \emph{approximation ratio} of a random variable $Z$ is given by the quantity $v(Z)/v \parens{Y_{[k]}}$.
\end{defin}

Unfortunately, the following result (whose proof Appendix~\ref{appx:prelims_omitted}) shows that with no further assumptions, it is not possible to attain an approximation ratio larger than $1/k$:

\begin{prop} \label{prop:weak_bad}
	For every $k$, there is an information structure that satisfies the weak substitutes condition, such that in both the prior-free and known prior settings, no aggregation strategy attains an approximation ratio greater than $1/k$ on the information structure.
\end{prop}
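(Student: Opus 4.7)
The plan is to construct, for each $n$, an explicit ``secret-sharing''-style information structure $\mathcal{I}$ on which weak substitutes holds tightly and no aggregation strategy improves on the $1/n$ ratio that the random-expert strategy already attains. The overall target is to engineer two properties simultaneously: first, that $v(Y_A)$ depends only on $|A|$ and grows linearly, so that $v(Y_A) = \tfrac{|A|}{n} v(Y_{[n]})$ (which forces Equation~\ref{eq:weak_subs} to hold with equality and pins the random-expert ratio of Proposition~\ref{prop:random_expert} at exactly $1/n$); and second, that the map $\vec\sigma \mapsto (Y_1,\dots,Y_n)$ is lossy enough that $\EE{Y \mid Y_1,\dots,Y_n} = Y_1$ almost surely, meaning no function of the forecasts can improve on consulting a single expert.

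The first step is to describe the structure, which I would take to be a symmetric hidden-identity construction: a secret index or random permutation couples the $n$ signals, and additional randomization compresses each $\sigma_i$ down to a forecast $Y_i$ that no longer reveals the index. The precise design is arranged so that the joint distribution of $(Y_1,\dots,Y_n)$ is exchangeable and any individual $Y_i$ is a sufficient summary of the whole vector for the purpose of predicting $Y$. The second step is to compute $v(Y_A)$ for every $A$ directly from the construction and verify that it equals $\tfrac{|A|}{n} v(Y_{[n]})$; weak substitutes then follows immediately as Equation~\ref{eq:weak_subs} collapses to equality with marginal $\tfrac{1}{n} v(Y_{[n]})$ for every new signal added. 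The third and decisive step is to compute the Bayesian aggregator $\EE{Y \mid Y_1,\dots,Y_n}$ explicitly and check that it equals $Y_1$ almost surely, using the hidden-identity randomization to argue that the aggregator cannot ``unmix'' the forecasts to extract the share-of-$Y$ information carried individually by the $\sigma_i$'s. Finally, since every aggregation strategy $X$ in either the prior-free or known-prior setting satisfies $v(X) \le \mathrm{Var}(\EE{Y \mid Y_1,\dots,Y_n}) = v(Y_1) = \tfrac{1}{n} v(Y_{[n]})$, the approximation ratio is at most $1/n$, and knowledge of the prior (which can be normalized in the construction) adds nothing.

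The main obstacle is the third step, because two natural constructions fail in opposite ways. A pure XOR-style ``all-or-nothing'' secret sharing of $Y$ does render forecasts useless but violates weak substitutes, as the marginal value of the final signal jumps up from zero rather than decreasing. Conversely, a clean additive construction such as $Y = \sum_i \sigma_i$ with i.i.d.\ standard Gaussian signals satisfies WS tightly but lets the aggregator reconstruct $Y_{[n]}$ exactly from the forecasts, giving ratio $1$. The construction must sit between these extremes, producing linearly-growing $v(Y_A)$ while obscuring enough signal-level information that the forecast vector really is informationally equivalent to any single forecast when predicting $Y$; I expect this to be achieved via a randomization of the ``identity'' of each expert that commutes with the natural sufficient statistics of the forecasts.
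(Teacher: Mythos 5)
Your plan correctly isolates the two properties the construction must have --- $v(Y_A) = \tfrac{|A|}{n}\, v(Y_{[n]})$ so that weak substitutes holds with equality, and a signal-to-forecast map lossy enough that the whole forecast vector is no more informative about $Y$ than a single forecast --- and it correctly diagnoses the tension between the XOR-style and additive extremes. But the argument stops exactly where the real work begins: no information structure is actually exhibited, and the ``hidden-identity / random permutation'' mechanism you gesture at for step three points in the wrong direction. Permuting expert identities does nothing in a construction that is symmetric anyway; the quantity that must be hidden from the aggregator is not \emph{which} expert is which but \emph{how many} experts are needed before the signals reveal anything.

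The paper's construction makes this concrete: draw a threshold $k$ uniformly from $[n]$, and run Shamir secret sharing over $\FF_p$ (with $p > n$ prime) using a random polynomial $P$ of degree $k-1$ whose constant term $a_0$ is uniform in $\{-1,1\}$; expert $i$ receives $P(i)$ and $Y = a_0$. For each fixed $k$ this is a maximally \emph{complementary} threshold structure ($Y_A = 0$ for $\abs{A} < k$ and $Y_A = \pm 1$ for $\abs{A} \ge k$), but mixing uniformly over $k$ gives $v(Y_A) = \abs{A}/n$, which is additive and hence submodular --- this ``lottery over thresholds'' is how one sits between the two extremes you describe. It also delivers your lossiness condition: whenever $k > 1$ (probability $\tfrac{n-1}{n}$) every individual forecast is $0$ and the forecast vector carries no information about $a_0$, so the aggregator can do no better than report the prior; only when $k = 1$ do the forecasts reveal $Y$. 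Hence $v(X) \le \tfrac{1}{n} = \tfrac{1}{n} v(Y_{[n]})$ for any aggregator, with or without the prior, which is exactly your concluding step. Your outline is therefore sound in shape, but without an explicit structure realizing both properties the proof is incomplete, and the identity-randomization idea you propose does not obviously supply one.
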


The key idea is to use Shamir secret sharing \cite{shamir79} to create an $(k, r)$-threshold scheme for a uniformly random $r \in [k]$. Then $v(\cdot)$ is additive (and thus submodular) on the subsets of $[k]$, but an aggregator who only knows the experts' reports will only be able to recover the secret if $r = 1$.

\begin{proof}
	Let $p > k$ be a prime. Consider the following information structure (the \emph{secret sharing information structure}).
	\begin{itemize}
		\item An integer $r \in [k]$ is selected uniformly at random and announced.
		\item A random $(r - 1)$-th degree polynomial $P(x) = a_0 + a_1x + \dots + a_{r - 1}x^{r - 1}$ over $\FF_p$ is selected, with coefficients chosen uniformly at random from $\FF_p$, except that $a_0$ is either $-1$ or $1$ (also uniformly). For each $i \in [k]$, expert $i$ is told $P(i)$.
		\item The quantity $Y$ is equal to $1$ if $a_0 = 1$ and $-1$ if $a_0 = -1$.
	\end{itemize}
	Note that for a fixed choice of $r$, we have $Y_A = 0$ if $\abs{A} < r$ and $Y_A = \pm 1$ if $\abs{A} \ge r$. Therefore, for any $A$ we have that $Y_A = \pm 1$ with probability $\frac{\abs{A}}{k}$ and $0$ otherwise. Therefore, we have that $v(Y_A) = \frac{\abs{A}}{k}$, so $v(\cdot)$ is additive (and thus submodular). Thus, this information structure satisfies weak substitutes.
	
	On the other hand, note that with probability $\frac{k - 1}{k}$, all experts report $0$ to the aggregator, in which case the aggregator cannot do better in expectation than also reporting $0$. Thus, it is impossible for the aggregator to report an estimate $X$ with $v(X) > \frac{1}{k}$, and so an approximation ratio larger than $\frac{1}{k}$ is not attainable.
\end{proof}

The secret sharing information structure is a lottery over $k$ different information structures, for each of which $v(\cdot)$ is a threshold function: any $r - 1$ experts know nothing ($v(A) = 0$ if $\abs{A} < r$), while any $r$ experts know everything ($v(A) = 1$ if $\abs{A} \ge r$). Except for $r = 1$, these information structures have experts that should be intuitively regarded as \emph{complementary}. Indeed, these structures generalize the XOR information structure (which is the case of $k = r = 2$). This suggests that properties of $v(\cdot)$ as a set function on $[k]$ are insufficient to capture what we intuitively mean by substitutable signals. This motivates us to seek a natural but stronger notion of informational substitutes --- one that is well-motivated and not too restrictive, but which rules out information structures such as this one and allows an aggregator to outperform the guarantee of the random expert strategy.

\subsection{Projective Substitutes} \label{sec:proj_subs}
The following fact (which we prove in Appendix~\ref{appx:prelims_omitted}) helps to motivate the notion that we will introduce.

\begin{prop} \label{prop:weak_subs_rewrite}
	The weak substitutes condition may be rewritten as: for any $i$ and $A \subseteq B$, we have
	\[\EE{(Y_B - Y_A)^2} \ge \EE{(Y_{B \cup \{i\}} - Y_{A \cup \{i\}})^2}.\]
\end{prop}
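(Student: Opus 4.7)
The plan is to reduce both sides of the weak substitutes inequality to squared-norm differences via the Pythagorean identity for nested conditional expectations (the usual law-of-total-variance decomposition).

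First I would establish the following key identity: whenever $A \subseteq B$,
\[
v(Y_B) - v(Y_A) = \EE{(Y_B - Y_A)^2}.
\]
The argument is that $Y_A = \EE{Y_B \mid \sigma_j : j \in A}$ by the tower property, so $Y_B - Y_A$ is a function of the signals in $B$ while $Y - Y_B$ is orthogonal (in $L^2$) to every such function. Hence
\[
\EE{(Y - Y_A)^2} = \EE{(Y - Y_B)^2} + \EE{(Y_B - Y_A)^2},
\]
and subtracting this identity from the definition of $v$ gives the claimed formula.

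Second, I would apply this identity to the two nested pairs $(A, B)$ and $(A \cup \{i\}, B \cup \{i\})$ (noting that $A \cup \{i\} \subseteq B \cup \{i\}$), obtaining
\[
v(Y_B) - v(Y_A) = \EE{(Y_B - Y_A)^2}, \qquad v(Y_{B \cup \{i\}}) - v(Y_{A \cup \{i\}}) = \EE{(Y_{B \cup \{i\}} - Y_{A \cup \{i\}})^2}.
\]
Rearranging the weak substitutes inequality $v(Y_{A \cup \{i\}}) - v(Y_A) \ge v(Y_{B \cup \{i\}}) - v(Y_B)$ gives $v(Y_B) - v(Y_A) \ge v(Y_{B \cup \{i\}}) - v(Y_{A \cup \{i\}})$, which by the two identities above is exactly $\EE{(Y_B - Y_A)^2} \ge \EE{(Y_{B \cup \{i\}} - Y_{A \cup \{i\}})^2}$. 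Since every step is an equivalence, the two formulations coincide.

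The only subtle step is the orthogonality argument in the Pythagorean identity, but this is standard: $Y - Y_B$ has zero conditional expectation given the $B$-signals, so its inner product with any $\sigma(B)$-measurable random variable (in particular $Y_B - Y_A$) vanishes. After that, the proof is purely algebraic rearrangement.
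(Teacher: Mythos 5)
Your proposal is correct and follows essentially the same route as the paper: both establish the identity $v(Y_B) - v(Y_A) = \EE{(Y_B - Y_A)^2}$ for $A \subseteq B$ via the Pythagorean/orthogonality argument for conditional expectations, then apply it to the pairs $(A,B)$ and $(A \cup \{i\}, B \cup \{i\})$ and rearrange the weak substitutes inequality. The only cosmetic difference is that the paper invokes its standalone Pythagorean theorem (Proposition~\ref{prop:pythag}) rather than re-deriving the orthogonality inline.
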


Intuitively, this interpretation of substitutes says: \textbf{For any expert $i$, a set $A$ of experts becomes better at predicting the belief of a superset of experts $B$ if $i$'s signal is announced.} Here, the \emph{belief} of a set $T$ of experts refers to the expected value of $Y$ conditioned on the experts' signals, i.e. $Y_T$.

This matches the intuition of substitutes as diminishing marginal returns: if signal $i$ becomes known, the ``information gap" between $A$ and $B$ decreases.

A more general notion of substitutes would require this to hold \emph{even when $B$ is not a supserset of $A$}. That is: for all $i, A, B$, the $A \cup \{i\}$ can predict the belief of $B \cup \{i\}$ better than $A$ can predict the belief of $B$. This captures the spirit of diminishing marginal returns in a somewhat broader context.

Let us formalize the notion of $A$'s prediction of $B$'s belief. By this we mean the expected value of $Y_B$ given the signal outcomes of the experts in $A$, i.e. $\EE{Y_B \mid \{\sigma_i: i \in A\}}$.

\begin{defin}
	Given an information structure $\mathcal{I} = (\Omega, \PP, S, Y)$ for $k$ experts and subsets $A, B \subseteq [k]$, \emph{$A$'s prediction of $B$'s belief} is defined as the expected value of $Y_B$ given the signal outcomes of the experts in $A$, i.e.
	\[Y_{B \to A} := \EE{Y_B \mid \{\sigma_i: i \in A\}}.\]
\end{defin}

We now state our substitutes definition, which strengthens the weak substitutes condition.
\begin{defin} \label{def:projective_subs}
	An information structure $\mathcal{I} = (\Omega, \PP, S, Y)$ for $k$ experts satisfies \emph{projective substitutes} if for all $A, B \subseteq [k]$ and $i \in [k]$, we have
	\begin{equation} \label{eq:proj_subs}
		\EE{(Y_B - Y_{B \to A})^2} \ge \EE{(Y_{B \cup \{i\}} - Y_{B \cup \{i\} \to A \cup \{i\}})^2}.
	\end{equation}
\end{defin}

The secret sharing information structure does not satisfy projective substitutes: take $A, B, i$ with $\abs{B} \ge \abs{A}$ and $i \in A \setminus B$. On the other hand, the example below does satisfy projective substitutes.

\begin{example}
	Consider the following information structure.
	\[\left\{Y = \begin{tabular}{c|cc}
		&$\sigma_2 = 1$&$\sigma_2 = 2$\\
		\hline
		$\sigma_1 = 1$&0&1\\
		$\sigma_1 = 2$&1&2
	\end{tabular}
	\qquad \PP =
	\begin{tabular}{c|cc}
		&$\sigma_2 = 1$&$\sigma_2 = 2$\\
		\hline
		$\sigma_1 = 1$&0.3&0.2\\
		$\sigma_1 = 2$&0.2&0.3
	\end{tabular}\right\}
	\]
	Let $A = \{1\}$ and $B = \{2\}$. It is not difficult to compute that
	\[Y_B = \begin{tabular}{c|cc}
		&$\sigma_2 = 1$&$\sigma_2 = 2$\\
		\hline
		$\sigma_1 = 1$&0.4&1.6\\
		$\sigma_1 = 2$&0.4&1.6
	\end{tabular} \qquad \qquad
	Y_{B \to A} = \begin{tabular}{c|cc}
		&$\sigma_2 = 1$&$\sigma_2 = 2$\\
		\hline
		$\sigma_1 = 1$&0.88&0.88\\
		$\sigma_1 = 2$&1.12&1.12
	\end{tabular}\]
	For example, in the $(\sigma_1, \sigma_2) = (1, 1)$ case we have that $Y_B = \frac{0.3 \cdot 0 + 0.2 \cdot 1}{0.3 + 0.2} = 0.4$. $Y_B$ is then used to compute $Y_{B \to A}$: for example, in the $(\sigma_1, \sigma_2) = (1, 1)$ case we have that $Y_{B \to A} = \frac{0.3 \cdot 0.4 + 0.2 \cdot 1.6}{0.3 + 0.2} = 0.88$, as this is the expected value of $Y_B$ conditioned on $\sigma_1 = 1$.
	
	It can be computed that $\EE{(Y_B - Y_{B \to A})^2} = 0.3456$, whereas $\EE{(Y_{B \cup \{1\}} - Y_{B \cup \{1\} \to A \cup \{1\}})^2} = \EE{(Y - Y_A)^2} = 0.24$, so Equation~\ref{eq:proj_subs} is satisfied for $A = \{1\}, B = \{2\}, i = 1$. It can be verified that Equation~\ref{eq:proj_subs} is in fact satisfied for \emph{all} $A, B, i$, so this information structure satisfies projective substitutes.
\end{example}

The projective substitutes definition can be interpreted as describing the class of information structures in which full information revelation is a dominant strategy. Specifically, consider a central party (call them the \emph{elicitor}) who knows the information structure but does not know the experts' signals. Experts are truthful, but may be strategic: they will not lie about their signal, but may decide not to reveal it. The elicitor wishes to structure incentives that will encourage each expert to reveal their signal. The elicitor puts experts on teams (but does not immediately announce the teams). Then:
\begin{enumerate}
	\item Each expert either reveals their signal to the elicitor, or does not.
	\item The elicitor announces which experts revealed their signals and announces the teams.
	\item Each team makes a prediction about the elicitor's posterior belief (after learning the signals of all experts who decided to reveal) and is scored using a quadratic scoring rule (i.e. penalized by the squared distance between their prediction of the elicitor's belief and the elicitor's actual belief).\footnote{Why not elicit $Y$ directly? Eliciting each team's best guess about the \emph{elicitor's} belief is particularly compelling in situations in which the true value of $Y$ will never be known, or will be learned in the far future. Under these circumstances, the elicitor's belief serves as an approximation for $Y$ given the available information.}
\end{enumerate}

This mechanism incentivizes experts to reveal their signals if and only if the information structure satisfies projective substitutes. Formally:
\begin{prop}
	An information structure satisfies projective substitutes if and only if in the above mechanism, revealing one's signal is a dominant strategy for every expert, regardless of who is on their team.
\end{prop}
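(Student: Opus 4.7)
The plan is to prove both directions by putting each reveal/don't-reveal comparison in one-to-one correspondence with an instance of the ``$i \in A$'' form of projective substitutes from Remark~\ref{rem:proj_subs_facts}\ref{item:proj_subs_equiv}: for every $A \ni i$ and $B \subseteq [n]$,
\[
\EE{(Y_B - Y_{B \to A})^2} \ge \EE{(Y_{B \cup \{i\}} - Y_{(B \cup \{i\}) \to A})^2}.
\]
Reading this inequality as a dominance comparison recovers the mechanism direction; reading the mechanism dominance as such an inequality recovers projective substitutes.

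For the direction ``dominance implies projective substitutes,'' fix any $A \ni i$ and $B \subseteq [n]$ (if $i \in B$ the inequality is trivial, so assume $i \notin B$). Place expert~$i$ on team $T = A$ and have every other expert~$j$ play the signal-independent pure strategy ``reveal iff $j \in B$.'' Then regardless of signal realizations the announced revelation pattern is constant: equal to $B$ if $i$ abstains and to $B \cup \{i\}$ if $i$ reveals. The team's information is therefore exactly $\sigma(\sigma_A)$, and by Theorem~\ref{thm:projection} its optimal prediction of the elicitor's belief is the $\mathcal{L}^2$-projection of that belief onto $\sigma(\sigma_A)$. The two expected squared scoring losses are thus $\EE{(Y_B - Y_{B \to A})^2}$ and $\EE{(Y_{B \cup \{i\}} - Y_{(B \cup \{i\}) \to A})^2}$, and weak dominance of revealing yields exactly the projective substitutes inequality.

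For the direction ``projective substitutes implies dominance,'' fix expert~$i$, team $T \ni i$, and arbitrary (possibly signal-dependent) strategies $\pi_j$ for the other experts. Let $R_{-i} := \{j \neq i : \pi_j(\sigma_j) = 1\}$ denote the random revelation pattern of the others and $\mathcal{F}_T := \sigma(\sigma_T, R_{-i})$ the team's information after step~2. Decomposing by realizations of $R_{-i}$, the two expected squared losses reduce to weighted averages (with weights $\pr{R_{-i} = R_0}$) of the per-realization losses $\EE{(Y_{R_0} - Y_{R_0 \to T})^2 \mid R_{-i} = R_0}$ (don't reveal) and $\EE{(Y_{R_0 \cup \{i\}} - Y_{(R_0 \cup \{i\}) \to T})^2 \mid R_{-i} = R_0}$ (reveal), so it suffices to establish the per-realization projective substitutes inequality for $(A, B, i) = (T, R_0, i)$ inside the conditional information structure.

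The main obstacle is precisely this ``conditional'' use of projective substitutes: the hypothesis concerns the original $\mathcal{I}$, whereas the argument needs the inequality inside each copy of $\mathcal{I}$ conditioned on an event $\{R_{-i} = R_0\}$ determined by~$\sigma_{-i}$, and conditioning does not in general preserve projective substitutes verbatim. My plan is to resolve this by lifting projective substitutes to a sigma-algebra formulation --- replacing $\sigma(\sigma_A)$ by an arbitrary sigma-sub-algebra $\mathcal{F} \supseteq \sigma(\sigma_i)$ of the signal algebra and proving the stronger inequality $\EE{(Y_B - \EE{Y_B \mid \mathcal{F}})^2} \ge \EE{(Y_{B \cup \{i\}} - \EE{Y_{B \cup \{i\}} \mid \mathcal{F}})^2}$ via the Hilbert-space projection viewpoint of Theorem~\ref{thm:projection} --- and then applying it with $\mathcal{F} = \mathcal{F}_T$. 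This sidesteps conditioning entirely by staying in the original probability space.
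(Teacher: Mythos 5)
Your core correspondence --- pairing each reveal/abstain comparison with an instance of the ``$i \in A$'' form of projective substitutes, and, for the converse, extracting the inequality for a given $A \ni i$ and $B$ by placing $i$ on team $A$ and having the others reveal iff they lie in $B$ --- is exactly the paper's proof. The paper takes $B$ to be the (fixed) set of other experts who reveal, observes that the team's prediction error is $\EE{(Y_B - Y_{B \to A})^2}$ if $i$ abstains and $\EE{(Y_{B \cup \{i\}} - Y_{B \cup \{i\} \to A \cup \{i\}})^2}$ if $i$ reveals, and notes that dominance is then literally Equation~\ref{eq:proj_subs} restricted to $i \in A$, which by Remark~\ref{rem:proj_subs_facts}~\ref{item:proj_subs_equiv} is equivalent to projective substitutes. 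Up to that point you and the paper agree.

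The gap is in your treatment of signal-dependent strategies $\pi_j$. You correctly observe that conditioning on $\{R_{-i} = R_0\}$ breaks the verbatim application of projective substitutes, but the lemma you propose as a fix --- that for \emph{every} sub-sigma-algebra $\mathcal{F}$ with $\sigma(\sigma_i) \subseteq \mathcal{F} \subseteq \sigma(\sigma_1, \dots, \sigma_n)$ one has $\EE{(Y_B - \EE{Y_B \mid \mathcal{F}})^2} \ge \EE{(Y_{B \cup \{i\}} - \EE{Y_{B \cup \{i\}} \mid \mathcal{F}})^2}$ --- is not a consequence of Definition~\ref{def:projective_subs}. The definition constrains projections only onto the finitely many subspaces generated by $\sigma_A$ for $A \ni i$; it says nothing about intermediate algebras such as $\sigma(\sigma_T, R_{-i})$, which reveal partial information about the signals outside $T$. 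No Hilbert-space identity transfers the inequality from one projection subspace to a larger one here, because the two sides involve projections of \emph{different} vectors ($Y_B$ and $Y_{B \cup \{i\}}$), so enlarging $\mathcal{F}$ can shrink the two sides by different amounts; the claimed generalization is an unproven (and, I believe, unprovable from the hypothesis) strengthening. A further wrinkle you would also need to address: once the $\pi_j$ are signal-dependent, the Bayesian elicitor's posterior is no longer $Y_{R_0}$ either, since the event $\{R_{-i} = R_0\}$ is itself informative about the unrevealed signals. The intended --- and the paper's --- reading is that dominance is assessed against each fixed realized set of revealers, i.e., against constant strategy profiles, for which your first paragraph already suffices; if you want the stronger fully game-theoretic statement, you need an actual proof of your sigma-algebra lemma, which is currently missing.
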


\begin{proof}
	First suppose that the information structure satisfies projective substitutes. Consider any expert $i$, let $A$ be $i$'s team, and let $B$ be the set of all other experts who reveal their signals. If $i$ does not reveal their signal, then the elicitor's belief will be $Y_B$ and $A$'s prediction of the elicitor's belief will be $Y_{B \to A}$. If $i$ reveals their signal, then the elicitor's belief will be $Y_{B \cup \{i\}}$ and $A$'s prediction of the elicitor's belief will be $Y_{B \cup \{i\} \to A} = Y_{B \cup \{i\} \to A \cup \{i\}}$. Therefore, by the projective substitutes, condition, $A$'s expected prediction error is smaller if $i$ reveals their signal to the expert.
	
	Conversely, suppose that the information structure does not satisfy projective substitutes. Then there are sets $A, B \subseteq [k]$ and $i \in A$ (see Remark~\ref{rem:proj_subs_facts}~\ref{item:proj_subs_equiv}) such that
	\[\EE{(Y_B - Y_{B \to A})^2} < \EE{(Y_{B \cup \{i\}} - Y_{B \cup \{i\} \to A \cup \{i\}})^2}.\]
	Consider expert $i$, suppose their team is $A$, and suppose that the set of experts excluding $i$ who reveal their signal is $B$. Then $i$ is incentivized not to reveal their signal to the elicitor, as revealing their signal will increase $A$'s expected prediction error.
\end{proof}

\begin{remark}[Facts about projective substitutes] \label{rem:proj_subs_facts} \phantom{}
	\begin{enumerate}[label=(\roman*)]
		\item The weak substitutes condition is equivalent to Equation~\ref{eq:proj_subs} holding for all $A \subseteq B$, so the projective substitutes condition is stronger. In fact it is strictly stronger, as it excludes the secret sharing information structure.
		
		\item \label{item:proj_subs_equiv} There are several equivalent formulations of projective substitutes. One definition replaces $\{i\}$ with an arbitrary set $X$. Another modifies the definition by only requiring Equation~\ref{eq:proj_subs} to hold if $i \in A$.\footnote{To see that this is equivalent, for any $A, i$ with $i \not \in A$ define $A' := A \cup \{i\}$. Then Equation~\ref{eq:proj_subs} for $A', B, i$ has the same right-hand side but a smaller or equal left-hand side, and is thus more difficult to satisfy.}
		
		\item The notation $Y_{B \to A}$ comes from the fact that, by Theorem~\ref{thm:projection}, $Y_{B \to A}$ is the projection of $Y_B$ onto the sigma algebra generated by the signals of the experts in $A$. As a consequence of this alternative formulation, $Y_{B \to A}$ is the closest random variable (by expected squared distance) to $Y_B$ among all random variables that depend only on the values $(\sigma_i)_{i \in A}$.
	\end{enumerate}
\end{remark}

In Sections~\ref{sec:prior_free} and \ref{sec:known_prior}, we explore how the projective substitutes condition allows us to improve upon the $1/k$-approximation guarantee attainable under the weak substitutes condition.

\subsection{The Pythagorean Theorem}
Finally, we will find the following fact to be useful throughout our work.
\begin{prop}[Pythagorean theorem] \label{prop:pythag}
	Let $A$ be a random variable, $B = \EE{A \mid \mathcal{F}}$ where $\mathcal{F}$ is a sigma-algebra, and $C$ be a random variable defined on $\mathcal{F}$. Then
	\[\EE{(A - C)^2} = \EE{(A - B)^2} + \EE{(B - C)^2}.\]
\end{prop}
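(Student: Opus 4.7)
The plan is to expand the square and show that the cross term vanishes by the defining property of conditional expectation. Write
\[(A - C)^2 = ((A - B) + (B - C))^2 = (A - B)^2 + 2(A - B)(B - C) + (B - C)^2,\]
and then take expectations on both sides. The result reduces to showing the single identity
\[\EE{(A - B)(B - C)} = 0.\]

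To establish this, I would use the tower law together with the fact that both $B = \EE{A \mid \mathcal{F}}$ and $C$ are $\mathcal{F}$-measurable, so $B - C$ is $\mathcal{F}$-measurable. Pulling the $\mathcal{F}$-measurable factor out of a conditional expectation yields
\[\EE{(A - B)(B - C) \mid \mathcal{F}} = (B - C) \cdot \EE{A - B \mid \mathcal{F}} = (B - C)(B - B) = 0,\]
where we used $\EE{A \mid \mathcal{F}} = B$ by definition. Applying $\EE{\cdot}$ to both sides and using the tower property gives $\EE{(A - B)(B - C)} = 0$, which is exactly what is needed.

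There is essentially no obstacle here beyond invoking the standard properties of conditional expectation; the only mild care point is ensuring that $C$ (and hence $B - C$) is genuinely $\mathcal{F}$-measurable, which is part of the hypothesis. For the manipulations to be rigorously justified one should also note that $A \in L^2$ in our applications (since all our $Y$'s satisfy $\EE{Y^2} < \infty$ and conditional expectations preserve square-integrability), so all the expectations above are finite and the expansion is valid.
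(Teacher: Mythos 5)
Your proof is correct and is essentially the same as the paper's: both arguments reduce to the orthogonality relation $\EE{(A - B)Z} = 0$ for $\mathcal{F}$-measurable $Z$, established via the tower property and pulling the $\mathcal{F}$-measurable factor out of the conditional expectation. The paper phrases this as the two identities $\EE{AB} = \EE{B^2}$ and $\EE{AC} = \EE{BC}$ and then rearranges, while you expand $((A-B)+(B-C))^2$ and kill the cross term directly, but the mathematical content is identical.
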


Informally, $B$ is the expected value of $A$ conditional on some partial information, and $C$ is a random variable whose value only depends on this information. We defer the proof to Appendix~\ref{appx:prelims_omitted}.

We call Proposition~\ref{prop:pythag} the ``Pythagorean theorem" because it is precisely the Pythagorean theorem in the Hilbert space described by the following seminal theorem from probability theory.

\begin{theorem}[\cite{zidak57}] \label{thm:projection}
	For a given probability space $(\Omega, \mathcal{F}, \PP)$, consider the Hilbert space $\mathcal{L}^2$ of square-integrable random variables with the inner product $\angles{X, Y} := \EE{XY}$. For a given sub-sigma-algebra $\mathcal{F}'$ of the probability space, the map $X \mapsto \EE{X \mid \mathcal{F}'}$ is same as the orthogonal projection map of $\mathcal{L}^2$ onto the subspace of $\mathcal{F}'$-measurable square-integrable random variables.
\end{theorem}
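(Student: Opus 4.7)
The plan is to prove the Hilbert-space projection characterization of conditional expectation by the standard two-step route: (i) verify that $Y := \EE{X \mid \mathcal{F}'}$ lives in the target subspace $H := \mathcal{L}^2(\Omega, \mathcal{F}', \PP)$, and (ii) verify the orthogonality condition $\angles{X - Y, Z} = 0$ for every $Z \in H$. Uniqueness of the orthogonal projection onto a closed subspace of a Hilbert space will then force $Y$ to equal the projection.

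First I would observe that $H$ is a closed linear subspace of $\mathcal{L}^2 := \mathcal{L}^2(\Omega, \mathcal{F}, \PP)$: linearity is immediate, and closedness follows from the fact that an $\mathcal{L}^2$-convergent sequence admits an a.s.-convergent subsequence, whose a.s.\ limit is $\mathcal{F}'$-measurable. Thus the standard Hilbert space projection theorem applies: for every $X \in \mathcal{L}^2$ there is a unique $\hat{X} \in H$ characterized by $\angles{X - \hat{X}, Z} = 0$ for all $Z \in H$. To show $Y \in H$, note that $Y$ is $\mathcal{F}'$-measurable by construction, and by the conditional Jensen inequality $Y^2 \le \EE{X^2 \mid \mathcal{F}'}$ pointwise, so $\EE{Y^2} \le \EE{\EE{X^2 \mid \mathcal{F}'}} = \EE{X^2} < \infty$, putting $Y$ in $\mathcal{L}^2$ and hence in $H$.

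Second, I would check the orthogonality relation. Fix $Z \in H$; since $X, Z \in \mathcal{L}^2$, Cauchy-Schwarz gives $XZ \in \mathcal{L}^1$, so the tower and pull-out properties of conditional expectation apply to yield
\[
\EE{XZ} = \EE{\EE{XZ \mid \mathcal{F}'}} = \EE{Z \cdot \EE{X \mid \mathcal{F}'}} = \EE{YZ},
\]
where the middle equality uses that $Z$ is $\mathcal{F}'$-measurable. Rearranging gives $\angles{X - Y, Z} = 0$. By uniqueness of the orthogonal projection, $Y = \hat{X}$, which is precisely the claim.

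The main obstacle, if any, is really just a matter of taste about what primitives to assume. If one defines $\EE{X \mid \mathcal{F}'}$ via Radon-Nikodym (the standard measure-theoretic definition), then the pull-out property used above requires a short approximation argument---first for $Z$ an $\mathcal{F}'$-measurable indicator (by the defining equation of conditional expectation), then for simple $Z$ by linearity, then for general $Z \in H$ by monotone/dominated convergence after splitting into positive and negative parts. None of this is difficult, but it is the only place in the argument that touches the definition of conditional expectation directly; everything else is Hilbert-space bookkeeping.
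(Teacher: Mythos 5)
The paper does not actually prove this statement: it is quoted as a classical result and attributed to \cite{zidak57}, so there is no in-paper argument to compare against. Your proof is the standard one --- show that $\mathcal{L}^2(\Omega,\mathcal{F}',\PP)$ is a closed subspace, that $\EE{X\mid\mathcal{F}'}$ lies in it (conditional Jensen for square-integrability), and that the residual is orthogonal to the subspace via the tower and pull-out properties, then invoke uniqueness of the orthogonal projection --- and it is correct and complete, including your flagging of the indicator-to-simple-to-general approximation needed for the pull-out step.
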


The Pythagorean theorem lets us rewrite the approximation ratio in a more convenient form. (See Appendix~\ref{appx:prelims_omitted} for the proof.)
\begin{corollary} \label{cor:improvement_rewrite}
	The approximation ratio of a random variable $Z$ that depends only on $\sigma_1, \dots, \sigma_k$ is equal to
	\[1 - \frac{\EE{(Y_{[k]} - Z)^2}}{\EE{(Y_{[k]} - \EE{Y})^2}}.\]
\end{corollary}
	\section{The Prior-Free Setting} \label{sec:prior_free}
We begin our investigation in the prior-free setting and ask: what is the largest approximation ratio that can be guaranteed under the projective substitutes condition? In this section we give a positive result and a negative result. The positive result is that \textbf{averaging the experts' reports attains an approximation ratio of at least $\mathbf{(1 + \sqrt{3}/2)/k - O(1/k^2) \approx 1.866/k}$}. The negative result is that for all $k$, \textbf{no aggregation method attains an approximation ratio of more than $\mathbf{2/k - 1/k^2}$}. Thus, projective substitutes enables a significant improvement over the $1/k$-approximation guarantee of the random expert strategy, but no more than by a factor of two.

\subsection{Positive Result for the Prior-Free Setting}
\begin{theorem} \label{thm:prior_free_positive}
	Let $\mathcal{I} = (\Omega, \PP, S, Y)$ be an information structure for $k$ experts that satisfies projective substitutes, and let $X = \frac{1}{k} \sum_{i = 1}^k Y_i$. Then $X$ attains an approximation ratio of at least
	\[\frac{2}{k} - \frac{k - 1}{2k(2k - 1 + \sqrt{3k^2 - 3k + 1})} - \frac{1}{k^2} \ge \parens{1 + \frac{\sqrt{3}}{2}} \cdot \frac{1}{k} - O \parens{\frac{1}{k^2}}.\]
\end{theorem}

\begin{proof}
	Let $X = \frac{1}{k} \sum_{i = 1}^k Y_i$. By Corollary~\ref{cor:improvement_rewrite}, showing that $X$ achieves an approximation ratio of $\alpha$ is equivalent to showing that
	\begin{equation} \label{eq:alpha_rewrite}
		\EE{(Y_{[k]} - X)^2} \le (1 - \alpha)\EE{(Y_{[k]} - \EE{Y})^2}.
	\end{equation}
	The first step in our proof uses the following fact: for any numbers $y_1, \dots, y_k$ and $y$, we have
	\[\parens{y - \frac{y_1 + \dots + y_k}{k}}^2 = \frac{1}{k} \sum_{i = 1}^k (y - y_i)^2 - \frac{1}{k^2} \sum_{1 \le i < j \le k} (y_i - y_j)^2.\]
	This equality follows from rearranging terms, and applying it in expectation for $y = Y_{[k]}$ and $y_i = Y_i$ gives us the following equality.
	\begin{equation} \label{eq:formula_one}
		\EE{(Y_{[k]} - X)^2} = \frac{1}{k} \sum_{i = 1}^k \EE{(Y_{[k]} - Y_i)^2} - \frac{1}{k^2} \sum_{1 \le i < j \le k} \EE{(Y_i - Y_j)^2}.
	\end{equation}
	The left-hand side here is the same as in Equation~\ref{eq:alpha_rewrite}; meanwhile, the right-hand side has a term representing the average error of a random expert and another term representing the average expected distance between the experts' forecasts. The following lemma allows us to get a handle on this last term.
	
	\begin{lemma} \label{lem:ab}
		For all $i, j$, and for all $a, b \ge 0$ such that $b \ge \frac{(2a - 1)^2}{4a}$, we have
		\begin{equation} \label{eq:ab}
			\EE{(Y_i - Y_j)^2} \ge a \parens{\EE{(Y_{\{i,j\}} - Y_i)^2} + \EE{(Y_{\{i,j\}} - Y_j)^2}} - b \parens{\EE{(Y_i - \EE{Y})^2} + \EE{(Y_j - \EE{Y})^2}}.
		\end{equation}
	\end{lemma}

	The proof of Lemma~\ref{lem:ab} relies on the projective substitutes assumption. The lemma lets us flexibly lower bound the expected distance between $Y_i$ and $Y_j$ in terms of the average expected distance from $Y_{\{i, j\}}$ to $Y_i$ and $Y_j$. Intuitively, the projective substitutes condition guarantees such a bound because expert $i$ must be able to forecast $Y_{\{i, j\}}$ better than they can forecast $Y_j$. For now we assume the truth of Lemma~\ref{lem:ab} and return to the proof of Theorem~\ref{thm:prior_free_positive}. We note that we may rewrite
	
	\begin{equation} \label{eq:pythag_app}
		\EE{(Y_i - \EE{Y})^2} = \EE{(Y_{[k]} - \EE{Y})^2} - \EE{(Y_{[k]} - Y_i)^2},
	\end{equation}
	by the Pythagorean theorem. Additionally, we note that by weak substitutes (which follows from projective substitutes), for all $i$ we have
	\begin{equation} \label{eq:weak_subs_use}
		\sum_{j \neq i} \EE{(Y_{\{i, j\}} - Y_i)^2} \ge \EE{(Y_{[k]} - Y_i)^2}.
	\end{equation}
	To see this, consider for example $i = 1$. By weak substitutes, $\EE{(Y_{\{1, j\}} - Y_1)^2} \ge \EE{(Y_{[j]} - Y_{[j - 1]})^2}$, so the left-hand side of Equation~\ref{eq:weak_subs_use} is greater than or equal to $\sum_{j > 1} \EE{(Y_{[j]} - Y_{[j - 1]})^2}$, which (by $k - 2$ applications of the Pythagorean theorem) is equal to $\EE{(Y_{[k]} - Y_1)^2}$. Now, combining Equations~\ref{eq:formula_one}, \ref{eq:ab}, \ref{eq:pythag_app}, and \ref{eq:weak_subs_use} gives us that
	\begin{equation} \label{eq:almost_done}
		\EE{(Y_{[k]} - X)^2} \le \frac{b(k - 1)}{k} \EE{(Y_{[k]} - \EE{Y})^2} + \parens{\frac{1}{k} - \frac{a}{k^2} - \frac{b(k - 1)}{k^2}} \sum_{i = 1}^k \EE{(Y_{[k]} - Y_i)^2}.
	\end{equation}
	for any $a, b$ satisfying Lemma~\ref{lem:ab}. Now, note that by weak substitutes we have
	\begin{equation} \label{eq:conditional_step}
		\sum_{i = 1}^k \EE{(Y_{[k]} - Y_i)^2} = k \EE{(Y_{[k]} - \EE{Y})^2} - \sum_{i = 1}^k \EE{(Y_i - \EE{Y})^2} \le (k - 1) \EE{(Y_{[k]} - \EE{Y})^2},
	\end{equation}
	where the first step uses the Pythagorean theorem and the second step follows from Proposition~\ref{prop:random_expert}. Therefore, if $\frac{1}{k} - \frac{a}{k^2} - \frac{b(k - 1)}{k^2} \ge 0$, we may write Equation~\ref{eq:almost_done} as
	\[\EE{(Y_{[k]} - X)^2} \le \frac{k - 1}{k} \parens{1 - \frac{a - b}{k}} \EE{(Y_{[k]} - \EE{Y})^2}.\]
	To make this inequality as tight as possible, we wish to make $a - b$ as large as possible; our constraints are that $b \ge \frac{(2a - 1)^2}{4a}$ and $\frac{1}{k} - \frac{a}{k^2} - \frac{b(k - 1)}{k^2} \ge 0$. The optimal values are
	\[a = \frac{2k - 1 + \sqrt{3k^2 - 3k + 1}}{2k} \text{ and } b = \frac{(2a - 1)^2}{4a}.\]
	This gives us
	\[\alpha = 1 - \frac{k - 1)}{k} \parens{1 - \frac{a - b}{k}} = \frac{2}{k} - \frac{k - 1}{2k(2k - 1 + \sqrt{3k^2 - 3k + 1})} - \frac{1}{k^2},\]
	as desired.
\end{proof}

\begin{proof}[Proof of Lemma~\ref{lem:ab}]
	Recall from Theorem~\ref{thm:projection} that random variables can be thought of as points in a Hilbert space. Let $T_i$ be the projection of $Y_j$ onto the space of all affine combinations of $Y_i$ and $Y_{\emptyset}$, i.e. $\{\beta Y_i + (1 - \beta) Y_{\emptyset}: \beta \in \RR\}$. (Recall that $Y_{\emptyset}$ is the random variable that is always equal to $\EE{Y}$.) Define $T_i$ analogously. Note that $\EE{(Y_j - Y_{j \to i})^2} \le \EE{(Y_j - T_j)^2}$, since $Y_{j \to i}$ is the closest point to $Y_j$ of the subspace of random variables that depend only on $\sigma_i$, and the aforementioned affine space is a subset of that subspace. Additionally, by projective substitutes (with $A = \{i\}, B = \{j\}, i = i$ in Definition~\ref{def:projective_subs}), we have that $\EE{(Y_{\{i, j\}} - Y_i)^2} \le \EE{(Y_j - Y_{j \to i})^2}$. Therefore, we have that $\EE{(Y_{\{i, j\}} - Y_i)^2} \le \EE{(Y_j - T_j)^2}$, and similarly that $\EE{(Y_{\{i, j\}} - Y_j)^2} \le \EE{(Y_i - T_i)^2}$. It therefore suffices to show that
	\begin{equation} \label{eq:wish_to_show_1}
		\EE{(Y_i - Y_j)^2} \ge a \parens{\EE{(Y_i - T_i)^2} + \EE{(Y_j - T_j)^2}} - b \parens{\EE{(Y_i - Y_{\emptyset})^2} + \EE{(Y_j - Y_{\emptyset})^2}}.
	\end{equation}
	
	By the Pythagorean theorem,\footnote{While in most cases by ``Pythagorean theorem" we mean Proposition~\ref{prop:pythag}, in this case we are referring to the fact that for orthogonal vectors $\vect{x}$ and $\vect{y}$, we have $\norm{\vect{x}}^2 + \norm{\vect{y}}^2 = \norm{\vect{x} + \vect{y}}^2$ (and applying this fact to e.g. $\vect{x} = Y_i - T_i$, $\vect{y} = Y_j - T_i$). Proposition~\ref{prop:pythag} can be thought of as following from this fact (in the context of Theorem~\ref{thm:projection}).} we know the following four facts.
	\begin{align*}
		\EE{(Y_i - Y_j)^2} &= \EE{(Y_i - T_i)^2} + \EE{(Y_j - T_i)^2}; & \EE{(Y_i - Y_j)^2} &= \EE{(Y_j - T_j)^2} + \EE{(Y_i - T_j)^2}\\
		\EE{(Y_i - Y_{\emptyset})^2} &= \EE{(Y_i - T_i)^2} + \EE{(T_i - Y_{\emptyset})^2}; & \EE{(Y_j - Y_{\emptyset})^2} &= \EE{(Y_j - T_j)^2} + \EE{(T_j - Y_{\emptyset})^2}
	\end{align*}
	These let us rewrite Equation~\ref{eq:wish_to_show_1} as such:
	\begin{align} \label{eq:wish_to_show_2}
		&\frac{1}{2} \parens{\EE{(Y_j - T_i)^2} + \EE{(Y_i - T_j)^2}} + \parens{a - \frac{1}{2}} \parens{\EE{(T_i - Y_{\emptyset})^2} + \EE{(T_j - Y_{\emptyset})^2}} \nonumber\\
		&\ge \parens{a - b - \frac{1}{2}} \parens{\EE{(Y_i - Y_{\emptyset})^2} + \EE{(Y_j - Y_{\emptyset})^2}}.
	\end{align}
	We wish to show that this inequality holds so long as $b \ge \frac{(2a - 1)^2}{4a}$. To do so, we note the following fact: for any random variables $Q, R$ and non-negative reals $c_1, c_2$, we have that
	\[c_1 \EE{Q^2} + c_2 \EE{R^2} \ge \frac{c_1 c_2}{c_1 + c_2} \EE{(Q + R)^2}.\]
	This follows (after multiplying through by $c_1 + c_2$ and cancelling terms) from the fact that for all $q, r$ we have $(c_1 q)^2 + (c_2 r)^2 \ge 2 (c_1 q)(c_2 r)$. Now, we apply this identity to $Q := Y_j - T_i$ and $R := T_i - Y_{\emptyset}$, with $c_1 = \frac{1}{2}$ and $c_2 = a - \frac{1}{2}$, and also to $Q := Y_i - T_j$ and $R := T_j - Y_{\emptyset}$. This tells us that
	\begin{align*}
		&\frac{1}{2} \parens{\EE{(Y_j - T_i)^2} + \EE{(Y_i - T_j)^2}} + \parens{a - \frac{1}{2}} \parens{\EE{(T_i - Y_{\emptyset})^2} + \EE{(T_j - Y_{\emptyset})^2}}\\
		&\ge \frac{2a - 1}{4a} \parens{\EE{(Y_i - Y_{\emptyset})^2} + \EE{(Y_j - Y_{\emptyset})^2}}.
	\end{align*}
	Therefore, Equation~\ref{eq:wish_to_show_2} holds so long as $\frac{2a - 1}{4a} \ge a - b - \frac{1}{2}$, which is equivalent to $b \ge \frac{(2a - 1)^2}{4a}$.
\end{proof}

\subsection{Negative Results for the Prior-Free Setting}
\begin{theorem} \label{thm:prior_free_negative}
	Fix any $k \ge 1$. For every $\mu \in \RR$, let $\mathcal{I}_\mu$ be the information structure defined as follows: each expert $i$ receives an independent signal $\sigma_i \sim N(\mu, 1)$, and $Y := \sum_{i = 1}^k \sigma_i$. Then
	\begin{enumerate}[label=(\roman*)]
		\item $\mathcal{I}_\mu$ satisfies projective substitutes for all $\mu$. \label{item:negative_1}
		\item No aggregation strategy achieves an approximation ratio of more than $\frac{2}{k} - \frac{1}{k^2}$ on every $\mathcal{I}_\mu$. \label{item:negative_2}
	\end{enumerate}
\end{theorem}

We note that there may be stronger substitutes conditions (or other conditions) that the information structures $\mathcal{I}_{\mu}$ satisfy. The negative result of theorem~\ref{thm:prior_free_negative} carries over to any such setting.

\begin{proof}
	We first prove \ref{item:negative_1}. Without loss of generality, assume that $\mu = 0$. For any $A, B$, we have $Y_B = \sum_{j \in B} \sigma_j$ and $Y_{B \to A} = \sum_{j \in A \cap B} \sigma_j$ (from the perspective of the experts in $A$, the values $\sigma_j$ for $j \in A \cap B$ are known and the values $\sigma_j$ for $j \in B \setminus A$ are zero in expectation). Therefore, for any $A, B, i$ we have
	\[\EE{(Y_B - Y_{B \to A})^2} = \parens{\sum_{j \in B \setminus A} \sigma_j}^2 = \parens{\sum_{j \in (B \cup \{i\}) \setminus (A \cup \{i\})} \sigma_j}^2 = \EE{(Y_{B \cup \{i\}} - Y_{B \cup \{i\} \to A \cup \{i\}})^2},\]
	so Equation~\ref{eq:proj_subs} is satisfied (with equality).\\
	
	We now prove \ref{item:negative_2}, and we do so in two steps: first we show that taking the average of the experts' reports yields an approximation ratio of exactly $\frac{2}{k} - \frac{1}{k^2}$ for all $\mu$, and second we show that no aggregation method beats averaging for every $\mu$.
	
	For the first step, again assume without loss of generality that $\mu = 0$. Then $Y_{[k]} = Y = \sum_i \sigma_i$ and the average of the $Y_i$'s is $X = \frac{1}{k} \sum_i \sigma_i$. Therefore we have
	\[\EE{(Y_{[k]} - X)^2} = \EE{\parens{\frac{k - 1}{k} \sum_i \sigma_i}^2} = \parens{\frac{k - 1}{k}}^2 \EE{\parens{\sum_i \sigma_i}^2}.\]
	On the other hand, we have that $\EE{(Y_{[k]} - \EE{Y})^2} = \EE{(\sum_i \sigma_i)^2}$, so (using Corollary~\ref{cor:improvement_rewrite}) we have that the approximation ratio is
	\[1 - \parens{\frac{k - 1}{k}}^2 = \frac{2}{k} - \frac{1}{k^2}.\]
	This completes the first step. To complete the second step, we use the following well-known result from statistical theory.\footnote{This fact does not generalize to more than two dimensions, meaning that if the $x_i$ are vectors in three or more dimensions drawn independently from a normal distribution with unknown mean and known covariance matrix, then there is an estimator for the mean that Pareto dominates the sample mean according to expected squared vector distance. One such estimator is the \emph{James-Stein estimator} \cite{stein56}.}
	
	\begin{prop}[\cite{blyth51}, \cite{gs51}, \cite{hl51}] \label{prop:admissible}
		Let $x_1, \dots, x_k$ be drawn independently from a normal distribution with unknown mean $\mu$ and standard deviation $1$. Let $\hat{\mu} := \frac{1}{k} \sum_{i = 1}^k x_i$. Then for every function $X$ of $x_1, \dots, x_k$, there exists $\mu$ such that $\EE{(X - \mu)^2} \ge \EE{(\hat{\mu} - \mu)^2}$.
	\end{prop}
	
	Since $\EE{(\hat{\mu} - \mu)^2} = \frac{1}{k}$ for every $\mu$, we have the following fact as a corollary.
	
	\begin{corollary} \label{cor:mean_estimator}
		Let $x_1, \dots, x_k$ be drawn independently from a normal distribution with unknown mean $\mu$ and standard deviation $1$. Then for every aggregation strategy $X$ that takes as input $x_1, \dots, x_k$, we have $\max_\mu \EE{(X - \mu)^2} \ge \frac{1}{k}$.
	\end{corollary}
	
	(The only subtlety is that aggregation strategies are not required to be deterministic; however, replacing a randomized aggregation strategy with the deterministic strategy that outputs the expected value of the randomized strategy given the inputs can only reduce expected squared error.)\\
	
	Returning to our proof, let $X$ be any aggregation strategy on inputs $Y_1, \dots, Y_k$. We define a new aggregation strategy: $\tilde{X} := \frac{1}{k - 1}(\sum_i Y_i - X)$. We claim that if $X$ achieves an approximation ratio of more than $\frac{2}{k} - \frac{1}{k^2}$ on every $\mu$, then $\tilde{X}$ violates Corollary~\ref{cor:mean_estimator}.
	
	Note that $Y_i = \sigma_i + (k - 1)\mu$, which means that the $Y_i$'s are independent samples from $N(k\mu, 1)$. Consider $\tilde{X}$ as an estimator for $k\mu$. We have
	
	\begin{align*}
		\EE{(k\mu - \tilde{X})^2} &= \EE{\parens{k\mu - \frac{1}{k - 1}\parens{\sum_i Y_i - X}}^2}\\
		&= \frac{1}{(k - 1)^2} \EE{\parens{\sum_i Y_i - k(k - 1)\mu - X}^2} = \frac{1}{(k - 1)^2} \EE{(Y - X)^2},
	\end{align*}
	where in the last step we use the fact that $Y = \sum_i \sigma_i = \sum_i Y_i - k(k - 1)\mu$. Now, suppose for contradiction that $X$ achieves an approximation ratio of more than $\frac{2}{k} - \frac{1}{k^2}$ on every $\mu$. Then for all $\mu$ we have
	\[\parens{\frac{k - 1}{k}}^2 = 1 - \parens{\frac{2}{k} - \frac{1}{k^2}} > \frac{\EE{(Y - X)^2}}{\EE{(Y - \EE{Y})^2}} = \frac{\EE{(Y - X)^2}}{k} = \frac{(k - 1)^2}{k} \EE{(k\mu - \tilde{X})^2},\]
	so $\EE{(k\mu - \tilde{X})^2} < \frac{1}{k}$ for every value of $k\mu$ (and thus for every $\mu$). This contradicts Corollary~\ref{cor:mean_estimator} and completes the proof.
\end{proof}

Theorems~\ref{thm:prior_free_positive} and \ref{thm:prior_free_negative} give us non-matching lower and upper bounds on the optimal approximation ratio under the projective substitutes condition. In particular, for $k = 2$ experts, Theorem~\ref{thm:prior_free_positive} tells us that averaging achieves an approximation ratio of $\frac{3 + \sqrt{7}}{8} \approx 0.706$, while Theorem~\ref{thm:prior_free_positive} tells us that no aggregation strategy can achieve an approximation ratio larger than $0.75$. We now show that for two experts, our positive result is tight.

\begin{theorem} \label{thm:prior_free_negative_n2}
	In the prior-free setting, no aggregation strategy achieves an approximation ratio larger than $\frac{3 + \sqrt{7}}{8}$ on every two-expert information structure that satisfies projective substitutes.
\end{theorem}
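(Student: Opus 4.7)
The plan is to deploy the indistinguishability trick hinted at in the theorem statement: exhibit two projective-substitutes information structures $\mathcal{I}_A$ and $\mathcal{I}_B$ on $n=2$ experts that induce identical joint laws on $(Y_1,Y_2)$ but have genuinely different Bayes-optimal aggregators. Since a prior-free aggregation strategy is a (possibly randomized) function of $(Y_1,Y_2)$, its output has the same joint distribution with the forecasts under either structure, so the aggregator is forced into a single compromise. Tightness will come from choosing $\mathcal{I}_A,\mathcal{I}_B$ so that this compromise caps the worst-case approximation ratio at exactly $\tfrac{3+\sqrt{7}}{8}$.

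First I would set up the two-structure minimax. Write $m_j(y_1,y_2) := \EE{Y_{[2]}^{(j)}\mid Y_1=y_1,Y_2=y_2}$ for $j\in\{A,B\}$. By the Pythagorean theorem (Proposition~\ref{prop:pythag}),
\[
\EE{(Y_{[2]}^{(j)} - X)^2} \;=\; \EE{(Y_{[2]}^{(j)} - m_j)^2} + \EE{(m_j - X)^2},
\]
so the approximation ratio of any $X=X(Y_1,Y_2)$ on $\mathcal{I}_j$ depends on $X$ only through $\EE{(m_j - X)^2}$. If one arranges that $\EE{(Y_{[2]}^{(j)} - \EE{Y^{(j)}})^2}$ and $\EE{(Y_{[2]}^{(j)} - m_j)^2}$ agree across $j \in \{A,B\}$, then minimizing the worse of the two approximation ratios reduces to minimizing $\max\bigl(\EE{(X-m_A)^2},\,\EE{(X-m_B)^2}\bigr)$ over $X$. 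The optimum is attained at $X=\tfrac12(m_A+m_B)$ with value $\tfrac14\EE{(m_A-m_B)^2}$, after which the upper bound on the best achievable ratio becomes an explicit decreasing function of the gap $\EE{(m_A-m_B)^2}$.

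The heart of the argument is then the construction and calibration of $\mathcal{I}_A,\mathcal{I}_B$ that maximize this gap subject to projective substitutes and the matching-marginals constraint. I would look inside a jointly Gaussian family so that projective substitutes reduces to a routine conditional-variance inequality (mirroring the verification in Theorem~\ref{thm:prior_free_negative}\ref{item:negative_1}) and each $m_j$ is automatically affine in $(Y_1,Y_2)$; the free parameters are then only a handful of variances and covariances. I expect the two extremal structures to correspond to the two tight instances of Lemma~\ref{lem:ab}, one where the two experts are nearly fully redundant (so $m_A$ is essentially their common forecast) and one where the signals combine additively (so $m_B$ is a scaled sum), with the scales tuned so that the induced $(Y_1,Y_2)$-laws coincide. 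The main obstacle is the arithmetic calibration: one must choose parameters so that the Lagrangian from the proof of Theorem~\ref{thm:prior_free_positive} is saturated with $a=(3+\sqrt{7})/4$ and $b=(2a-1)^2/(4a)$, making the two-structure minimax land on exactly $\tfrac{3+\sqrt{7}}{8}$ rather than some strictly smaller constant.
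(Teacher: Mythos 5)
Your high-level plan --- two projective-substitutes structures inducing the same joint law of $(Y_1,Y_2)$ but different posteriors $m_j = \EE{Y_{[2]}\mid Y_1,Y_2}$, forcing any prior-free aggregator into a compromise --- is exactly the paper's strategy. But your concrete instantiation has a genuine gap: a jointly Gaussian family with affine $m_j$ cannot produce such a pair. For any information structure the tower property gives $\EE{Y_{[2]}\mid \sigma_i} = Y_i$, hence $\EE{m_j\mid Y_1} = Y_1$ and $\EE{m_j\mid Y_2} = Y_2$. If $(Y_1,Y_2)$ is jointly Gaussian and $m_j = \lambda_1 Y_1 + \lambda_2 Y_2$ is affine, these two conditions are two linear equations in $(\lambda_1,\lambda_2)$ whose coefficients depend only on the covariance matrix of $(Y_1,Y_2)$; generically they determine $(\lambda_1,\lambda_2)$ uniquely, and in the degenerate perfectly-correlated case they still force $m_j = Y_1 = Y_2$. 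So two Gaussian structures with matching $(Y_1,Y_2)$ laws necessarily have $m_A = m_B$, the gap $\EE{(m_A-m_B)^2}$ vanishes, and the construction certifies nothing. This is precisely why the paper uses discrete structures supported on $\{-1,1\}^2$: there the constraints $\EE{m\mid Y_i} = Y_i$ leave the coefficient of the interaction term $Y_1Y_2$ free, and the two structures $\mathcal{I}_\pm$ differ exactly in the sign of that term (off-diagonal value $x$ versus $-x$).

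Separately, even granting a suitable pair, the two-structure minimax with the pointwise-midpoint aggregator does not by itself yield $\frac{3+\sqrt 7}{8}$. The midpoint rule also moves the output on the ``agreeing'' inputs away from the common reported value, and against $\{\mathcal{I}_+,\mathcal{I}_-\}$ alone that is strictly better --- it is essentially what produces the larger known-prior bound $\frac{7\sqrt 7 - 17}{2}$ in Theorem~\ref{thm:known_prior_negative_n2}. The paper's prior-free argument additionally invokes degenerate structures in which $Y$ equals a constant deterministically; these are confusable with $\mathcal{I}_\pm$ on the diagonal inputs only in the prior-free setting (they have a different prior), and they force the aggregator to output $1$ on $(1,1)$ and $-1$ on $(-1,-1)$. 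Only after the diagonal is pinned down does the minimax over the off-diagonal inputs land on $\frac{3+\sqrt 7}{8}$. Your proposal omits this ingredient, so the game you set up would certify only a weaker upper bound even with a correct pair of structures.
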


\begin{proof}
	Let $\mathcal{I}_+$ be the following information structure, where $p = 1 - \frac{\sqrt{7}}{4}$ and $x = \sqrt{14} - 2\sqrt{2}$. We label the signals $-1$ and $1$ because these are the expected values conditional on the respective signals.
	\[\mathcal{I}_+ := \left\{Y = \begin{tabular}{c|cc}
		&$\sigma_2 = 1$&$\sigma_2 = -1$\\
		\hline
		$\sigma_1 = 1$&$\frac{1 - (1 - 2p)x}{2p}$&$x$\\
		$\sigma_1 = -1$&$x$&$\frac{-1 - (1 - 2p)x}{2p}$
	\end{tabular}
	\qquad \PP = \begin{tabular}{c|cc}
		&$\sigma_2 = 1$&$\sigma_2 = -1$\\
		\hline
		$\sigma_1 = 1$&$p$&$\frac{1}{2} - p$\\
		$\sigma_1 = -1$&$\frac{1}{2} - p$&$p$
	\end{tabular}\right\}
	\]
	Let $\mathcal{I}_-$ be the same information structure, but with $x = 2\sqrt{2} - \sqrt{14}$. It is a matter of calculation to verify that these information structures satisfy projective substitutes.\footnote{These information structures were found by finding values that would make the inequalities in the proofs of Theorem~\ref{thm:prior_free_positive} and Lemma~\ref{lem:ab} hold with equality.}
	
	Note that any aggregation strategy that outputs a number other than $1$ on input $(Y_1, Y_2) = (1, 1)$ has an approximation ratio of negative infinity on an information structure where $Y = 1$ deterministically. This is likewise true for $-1$ in place of $1$. Thus, if Theorem~\ref{thm:prior_free_negative_n2} were false, it would be disproved by an information structure that outputs $1$ on $\mathcal{I}_+$ and $\mathcal{I}_-$ if $(\sigma_1, \sigma_2) = (1, 1)$ and $-1$ if $(\sigma_1, \sigma_2) = (-1, -1)$. Conditional on this, the aggregation method that minimizes the maximum expected squared distance to $Y_{[k]}$ on $\mathcal{I}_+$ and $\mathcal{I}_-$ is the one that returns $0$ when $(\sigma_1, \sigma_2) = (1, -1)$ or $(\sigma_1, \sigma_2) = (-1, 1)$. It is a matter of calculation to verify that this aggregation strategy achieves an approximation ratio of exactly $\frac{3 + \sqrt{7}}{8}$.
\end{proof}
	\section{The Known Prior Setting} \label{sec:known_prior}
Let us now expand the information available to the aggregator by allowing them knowledge of the prior $\EE{Y}$. How might this change the optimal aggregation strategy?

Consider the following information structure: a coin comes up heads $Y$ fraction of the time, selected uniformly from $[0, 1]$. Each of two experts sees an independent flip of the coin. It can be calculated that an expert who sees heads has a posterior of $\frac{2}{3}$. However, consider the situation in which both experts report heads: collectively they have seen two heads and zero tails, conditional on which the expected value of $Y$ is $\frac{3}{4}$.

In this setting, it is beneficial to push the average of the experts' reports away from the prior, a method known as \emph{extremization} in the forecasting literature. Extremization has been demonstrated empirically to improve the accuracy of aggregate forecasts, and the theoretical intuition for the benefit of extremization extends beyond the above example. An expert's report is the posterior resulting from observing a fraction of all available evidence. If many experts update upward from the prior as a result of each of their pieces of evidence, then it stands to reason that observing \emph{all of the evidence} would cause an update that is larger than the update of an average expert.

\subsection{The Extremization Factor}
Consider the following aggregation strategy, parameterized by a constant $d$ which we will call the \emph{extremization factor}.
\begin{equation} \label{eq:ext_factor}
	X := \frac{1}{k} \sum_i Y_i + (d - 1) \parens{\frac{1}{k} \sum_i Y_i - \EE{Y}}.
\end{equation}
Setting $d = 1$ recovers the average of the reports; setting $d = 0$ simply returns the prior. In general, setting $d > 1$ extremizes the average (i.e. pushes it away from the prior) by a factor of $d$. As an example, consider the class of information structures in Theorem~\ref{thm:prior_free_negative}, where averaging achieved an approximation ratio of $\frac{2}{k} - \frac{1}{k^2}$. On the other hand, extremizing by a factor of $k$ (i.e. setting $d = k$ above) recovers $Y$ exactly (thus achieving an approximation ratio of $1$).

We now prove that by extremizing, we can achieve an approximation ratio that is higher than what we could hope to attain without knowledge of the prior. In particular, we find that \textbf{by extremizing, it is possible to achieve an approximation ratio of at least $\frac{3\sqrt{3}}{2k} - O(1/k^2) \approx 2.598/k$}, a substantial improvement not only over our positive result in the prior-free setting, but also over our negative result.

\subsection{Positive Result for the Known Prior Setting}
\begin{theorem} \label{thm:known_prior_positive}
	Let $\mathcal{I} = (\Omega, \PP, S, Y)$ be an information structure for $k$ experts that satisfies projective substitutes, and let $X = \frac{1}{k} \sum_{i = 1}^k Y_i + (d - 1) \parens{\frac{1}{k} \sum_{i = 1}^n Y_i - \EE{Y}}$, where $d = \frac{k(\sqrt{3k^2 - 3k + 1} - 2)}{k^2 - k - 1}$. Then $X$ attains an approximation ratio of at least
	\[\frac{(3k^2 - 3k + 1)^{3/2} - 9k^2 + 9k + 1}{2(k^2 - k - 1)^2} \ge \frac{3\sqrt{3}}{2k} - O \parens{\frac{1}{k^2}}.\]
\end{theorem}

The proof is similar to the proof of Theorem~\ref{thm:prior_free_positive}. We provide a sketch of the proof and refer to Appendix~\ref{appx:prelims_omitted} for the details.

\begin{proof}[Proof sketch]
	Let $d > 0$ and let $X := \frac{1}{k} \sum_i Y_i + (d - 1) \parens{\frac{1}{k} \sum_i Y_i - \EE{Y}}$. By rearranging terms and applying Equation~\ref{eq:formula_one}, we find that
	\[\EE{(Y_{[k]} - X)^2} = (d - 1)^2 \EE{(Y_{[k]} - \EE{Y})^2} - \frac{d(d - 2)}{k} \sum_i \EE{(Y_{[k]} - Y_i)^2} - \frac{d^2}{k^2} \sum_{1 \le i < j \le k} \EE{(Y_i - Y_j)^2}.\]
	We then apply Lemma~\ref{lem:ab} and use the Pythagorean theorem and Equation~\ref{eq:conditional_step} to find that for any $a, b \ge 0$ such that $b \ge \frac{(2a - 1)^2}{4a}$, if $\frac{d(d - 2)}{k} + \frac{bd^2(k - 1)}{k^2} + \frac{ad^2}{k^2} \le 0$ then
	\[\EE{(Y_{[k]} - X)^2} \le \parens{1 + \frac{d^2 - 2d}{k} - (a - b) \frac{k - 1}{k^2} d^2} \EE{(Y_{[k]} - \EE{Y})^2}.\]
	Just as in the proof of Theorem~\ref{thm:prior_free_positive}, our goal is now to maximize $a - b$, this time with the constraints $b \ge \frac{(2a - 1)^2}{4a}$ and $\frac{d(d - 2)}{k} + \frac{bd^2(k - 1)}{k^2} + \frac{ad^2}{k^2} \le 0$. Solving this optimization problem yields the theorem statement.
\end{proof}

It is worth noting how $d$ varies with $k$. While $d$ increases with
$k$, it reaches a limit --- namely, $\sqrt{3} \approx
1.732$.\footnote{Interestingly, the value of~$d$ recommended by our theoretical
  analysis is consonant with the empirical findings in~\cite{su15}.}
  Thus,
the strategy given in Theorem~\ref{thm:known_prior_positive} does not
extremize in the way that one might guess based on the optimal
response to the information structures in our negative result for the
prior-free setting, Theorem~\ref{thm:prior_free_negative}.
This makes sense, because the signals received by each expert were independent; the worst-case optimal strategy, on the other hand, must compromise between doing well in such settings (where a large extremization factor makes sense) and ones where experts' information is highly dependent (where little or no extremization is optimal).

\subsection{Negative Results for the Known Prior Setting}
We first prove a conditional negative result: specifically, that our above approach of averaging and extremizing by a constant factor cannot achieve an approximation ratio better than $\frac{4}{k}$.

\begin{theorem} \label{thm:known_prior_conditional_negative}
	Fix any $k \ge 1$. Let $\mathcal{I}_{\text{ind}}$ be the information structure in which each expert $i$ receives an independent signal $\sigma_i \sim N(0, 1)$. Let $\mathcal{I}_{\text{eq}}$ be the information structure in which each expert $i$ receives the \emph{same} signal $\sigma_i$, which is also drawn from $\sim N(0, 1)$. For both information structures, $Y = \sum_{i = 1}^k \sigma_i$. Then
	
	\begin{enumerate}[label=(\roman*)]
		\item $\mathcal{I}_{\text{ind}}$ and $\mathcal{I}_{\text{dep}}$ satisfy projective substitutes.
		\item There is no $d \in \RR$ for which the aggregation strategy that averages the experts' reports and extremizes by a factor of $d$ attains an approximation ratio of more than $\frac{4k}{(k + 1)^2}$ on both $\mathcal{I}_{\text{ind}}$ and $\mathcal{I}_{\text{dep}}$.
	\end{enumerate}
\end{theorem}

As with Theorem~\ref{thm:prior_free_negative}, the result extends to all settings that permit both $\mathcal{I}_{\text{ind}}$ and $\mathcal{I}_{\text{dep}}$.

\begin{proof}
	The fact that $\mathcal{I}_{\text{ind}}$ satisfies projective substitutes follows from part \ref{item:negative_1} of Theorem~\ref{thm:prior_free_negative}. The fact that $\mathcal{I}_{\text{dep}}$ satisfies projective substitutes is clear because the right-hand side of Equation~\ref{eq:proj_subs} is zero.
	
	To show the second part, note that for $\mathcal{I}_{\text{ind}}$ we have $Y_i = \sigma_i$, so $Y = \sum_i Y_i$, and for $\mathcal{I}_{\text{dep}}$ we have $Y_i = k\sigma_i$, so $Y = \frac{1}{k} \sum_i Y_i$. Therefore, an extremization factor of $d$ (as in Equation~\ref{eq:ext_factor}) attains an approximation ratio of $1 - \frac{(k - d)^2}{k^2}$ for $\mathcal{I}_{\text{ind}}$ and $1 - (1 - d)^2$ for $\mathcal{I}_{\text{dep}}$. The $d$ that maximizes the minimum of these two expressions is $d = \frac{2k}{k + 1}$, which yields an approximation ratio of $\frac{4k}{(k + 1)^2}$ for both information structures.
\end{proof}

Finally, we note that Theorem~\ref{thm:known_prior_positive} tells us that by averaging and extremizing by a factor of $2(\sqrt{7} - 2) \approx 1.292$ achieves an approximation ratio of $\frac{7\sqrt{7} - 17}{2} \approx 0.760$. We prove that this is tight. Note that this result, unlike Theorem~\ref{thm:known_prior_conditional_negative}, is unconditional.

\begin{theorem} \label{thm:known_prior_negative_n2}
	In the known prior setting, no aggregation strategy achieves an approximation ratio larger than $\frac{7\sqrt{7} - 17}{2}$ on every two-expert information structure that satisfies projective substitutes.
\end{theorem}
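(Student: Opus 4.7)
The plan is to mirror the proof of Theorem~\ref{thm:prior_free_negative_n2} adapted to the known-prior setting: I will exhibit a pair of two-expert information structures $\mathcal{I}_+$ and $\mathcal{I}_-$ that both satisfy projective substitutes, share the same joint distribution of $(Y_1, Y_2, \EE{Y})$, and yet disagree on the value of $Y_{\{1,2\}}$ conditional on each signal pair. Indistinguishability of the two structures to any known-prior aggregator then reduces the question to a pointwise one-dimensional minimax problem, and the parameters of $\mathcal{I}_\pm$ will be tuned so that the resulting bound is exactly $\frac{7\sqrt{7} - 17}{2}$.

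Concretely, I would take $\mathcal{I}_\pm$ to be of the same symmetric form as in the proof of Theorem~\ref{thm:prior_free_negative_n2}: signals $\sigma_1, \sigma_2 \in \{-1, 1\}$ with $\pr{\sigma_1 = \sigma_2 = 1} = \pr{\sigma_1 = \sigma_2 = -1} = p$ and $\pr{\sigma_1 \ne \sigma_2} = 1 - 2p$ spread equally across the two off-diagonal pairs, so that $Y_i = \sigma_i$ and $\EE{Y} = 0$ in both. In $\mathcal{I}_+$, assign $Y$ equal to $x$ on each off-diagonal signal pair (with the on-diagonal entries uniquely determined by the constraint $Y_i = \sigma_i$), and in $\mathcal{I}_-$ assign $-x$ instead. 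Both structures have identical observable distributions, so the strategy's output on each signal pair is a single real number. A direct calculation shows that the worst-case-optimal output is the midpoint of the two candidate $Y_{\{1,2\}}$ values --- namely $\pm \frac{1}{2p}$ on the agreement pairs and $0$ on the disagreement pairs --- and that the resulting expected squared error is equal on $\mathcal{I}_+$ and $\mathcal{I}_-$.

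A crucial difference from Theorem~\ref{thm:prior_free_negative_n2} is that in the known-prior setting I can no longer append degenerate structures with $Y$ deterministically equal to $\pm 1$ in order to force the strategy's output to be $\pm 1$ on the agreement signal pairs: those degenerate structures have prior $\pm 1 \ne 0$, and a known-prior aggregator can simply branch on the observed $\EE{Y}$ to handle them separately. Consequently, the aggregator is freer on the agreement signal pairs than in the prior-free setting, which is exactly why the achievable approximation ratio is larger ($\frac{7\sqrt{7}-17}{2} > \frac{3+\sqrt{7}}{8}$), and why different parameters $(p, x)$ than those of Theorem~\ref{thm:prior_free_negative_n2} must be chosen here.

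The main obstacle is pinning down the right $(p, x)$. Following the philosophy stated in Theorem~\ref{thm:prior_free_negative_n2}, I would reverse-engineer these values from the positive result: the inequalities used in the proof of Theorem~\ref{thm:known_prior_positive} --- in particular Lemma~\ref{lem:ab} and the nonpositivity constraint on $\frac{d(d-2)}{n} + \frac{bd^2(n-1)}{n^2} + \frac{ad^2}{n^2}$ --- must hold with equality on any worst-case structure. Evaluated at $n = 2$ and at the optimal extremization factor $d = 2(\sqrt{7} - 2)$, these equality conditions should pin down $(p, x)$ uniquely. Once the parameters are identified, checking projective substitutes reduces to verifying Equation~\ref{eq:proj_subs} on the finitely many triples $(A, B, i)$ that arise for $n = 2$, and computing the midpoint strategy's approximation ratio via Corollary~\ref{cor:improvement_rewrite} is a routine algebraic exercise that I expect to yield exactly $\frac{7\sqrt{7} - 17}{2}$.
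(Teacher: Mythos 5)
Your proposal matches the paper's proof essentially exactly: the paper uses the same symmetric pair $\mathcal{I}_\pm$ (with $p = \frac{2+\sqrt{7}}{12}$, $x = \pm\frac{\sqrt{2+\sqrt{7}}}{3}$, reverse-engineered from the tightness conditions of the positive result just as you describe), the same indistinguishability argument, and the same minimax-optimal midpoint response of $\pm\frac{1}{2p}$ on agreement pairs and $0$ on disagreement pairs. Your observation about why the degenerate-structure trick from the prior-free case no longer applies is a correct and useful elaboration of why the parameters differ, but the argument is the same one.
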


\begin{proof}
Let $\mathcal{I}_+$ be the following information structure, where $p = \frac{2 + \sqrt{7}}{12}$ and $x = \frac{\sqrt{2 + \sqrt{7}}}{3}$. We label the signals $-1$ and $1$ because these are the expected values conditional on the respective signals.
\[\mathcal{I}_+ := \left\{Y = \begin{tabular}{c|cc}
	&$\sigma_2 = 1$&$\sigma_2 = -1$\\
	\hline
	$\sigma_1 = 1$&$\frac{1 - (1 - 2p)x}{2p}$&$x$\\
	$\sigma_1 = -1$&$x$&$\frac{-1 - (1 - 2p)x}{2p}$
\end{tabular}
\qquad \PP = \begin{tabular}{c|cc}
	&$\sigma_2 = 1$&$\sigma_2 = -1$\\
	\hline
	$\sigma_1 = 1$&$p$&$\frac{1}{2} - p$\\
	$\sigma_1 = -1$&$\frac{1}{2} - p$&$p$
\end{tabular}\right\}
\]
Let $\mathcal{I}_-$ be the same information structure, but with $x = -\frac{\sqrt{2 + \sqrt{7}}}{3}$. It is a matter of calculation to verify that these information structures satisfy projective substitutes. The quantity $\EE{(Y - \EE{Y})^2}$ is the same for $\mathcal{I}_+$ and $\mathcal{I}_-$, so the aggregation strategy $X$ that guarantees the largest possible approximation ratio when the information structure is one of $\mathcal{I}_+$ and $\mathcal{I}_-$ is the one that minimizes the maximum value of $\EE{(Y - X)^2}$ over these two information structures. This is achieved by outputting $0$ when $(Y_1, Y_2)$ is $(1, -1)$ or $(-1, 1)$, $\frac{1}{2p}$ when $(Y_1, Y_2) = (1, 1)$, and $\frac{-1}{2p}$ when $(Y_1, Y_2) = (-1, -1)$. It is a matter of calculation to verify that this aggregation strategy achieves an approximation ratio of exactly $\frac{7\sqrt{7} - 17}{2}$.
\end{proof}

\begin{remark} \label{rem:other_information}
	The example in Theorem~\ref{thm:known_prior_negative_n2} in fact shows that for $k = 2$ experts it is impossible to achieve an approximation ratio larger than $\frac{7\sqrt{7} - 17}{2}$ even if the aggregator has access to substantially more information than the prior. In particular, the aggregator cannot do better even if given access to $\PP$ and to the random variables $Y_1, Y_2$ as functions (i.e. what the value of $Y_1$ is under every possible signal outcome $\sigma_1$, and similarly for $Y_2$).
\end{remark}

\subsection{Beyond Linear Extremization?}
In this section we have focused on what \cite{su15} call \emph{linear extremization}: extremization by a fixed multiplicative factor. To our knowledge this is the only extremization method for real-valued forecasts that has been studied.\footnote{Other methods have been studied for extremizing probabilistic forecasts, see e.g. \cite{sbfmtu14}, but that is not our setting.} As byproducts of our upper and lower bounds on the power of this technique, our work provably separates what is possible with and without knowledge of the prior, justifies in a formal sense the practice of linear extremization (by showing that it is superior to straightforward averaging), and provides guidance as to how much extremization is appropriate.

Our general  model reveals a natural open question, namely whether there are more exotic aggregation strategies guaranteed to outperform linear extremization. This question, which can easily be phrased formally using the definitions and performance metrics of our general model, appears not to be formally explored in the existing literature and is an intriguing question for future work.
	
	\printbibliography
	
	\appendix
	\section{Omitted Proofs} \label{appx:prelims_omitted}
\begin{proof}[Proof of Proposition~\ref{prop:weak_subs_rewrite}]
	Note that for an information structure $\mathcal{I} = (\Omega, \PP, S, Y)$ for $k$ experts and subsets $A \subseteq B \subseteq [k]$, we have
	\begin{equation} \label{eq:v_diff}
		v(Y_B) - v(Y_A) = \EE{(Y_B - Y_A)^2}.
	\end{equation}
	\[\]
	Indeed, we have
	\begin{align*}
		v(Y_B) - v(A) &= (\EE{(Y - \EE{Y})^2} - \EE{(Y - Y_B)^2}) - (\EE{(Y - \EE{Y})^2} - \EE{(Y - Y_A)^2})\\
		&= \EE{(Y - Y_A)^2} - \EE{(Y - Y_B)^2} = \EE{(Y_B - Y_A)^2},
	\end{align*}
	where the last step follows by setting the variables $A, B, C$ in Proposition~\ref{prop:pythag} to $Y$, $Y_B$, and $Y_A$, respectively. Proposition~\ref{prop:weak_subs_rewrite} follows directly from Equation~\ref{eq:v_diff} after rearranging the terms in Equation~\ref{eq:weak_subs}.
\end{proof}

\begin{proof}[Proof of Proposition~\ref{prop:pythag}]
	Observe that $\EE{AB} = \EE{\EE{AB \mid B}} = \EE{B^2}$, so $\EE{(A - B)^2} = \EE{A^2} - \EE{B^2}$. Additionally, note that
	\[\EE{AC} = \EE{\EE{AC \mid \mathcal{F}}} = \EE{\EE{A \mid \mathcal{F}} C} = \EE{BC},\]
	where in the second step we use that $C$ is defined on $\mathcal{F}$. Therefore, we have
	\begin{align*}
		\EE{(A - C)^2} &= \EE{A^2} + \EE{C^2} - 2\EE{BC} = \EE{A^2} - \EE{B^2} + \EE{B^2} - 2\EE{BC} + \EE{C^2}\\
		&= \EE{(A - B)^2} + \EE{(B - C)^2},
	\end{align*}
	as desired.
\end{proof}

\begin{proof}[Proof of Corollary~\ref{cor:improvement_rewrite}]
	We have
	\begin{align*}
		\frac{v(Z)}{v(Y_{[k]})} &= \frac{\EE{(Y - \EE{Y})^2} - \EE{(Y - Z)^2}}{\EE{(Y - \EE{Y})^2} - \EE{(Y - Y_{[k]})^2}}\\
		&= \frac{\parens{\EE{(Y - Y_{[k]})^2} + \EE{(Y_{[k]} - \EE{Y})^2}} - \parens{\EE{(Y - Y_{[k]})^2} + \EE{(Y_{[k]} - Z)^2}}}{\EE{(Y - \EE{Y})^2} - \EE{(Y - Y_{[k]})^2}}\\
		&= \frac{\EE{(Y_{[k]} - \EE{Y})^2} - \EE{(Y_{[k]} - Z)^2}}{\EE{(Y - Y_{[k]})^2}} = 1 - \frac{\EE{(Y_{[k]} - Z)^2}}{\EE{(Y_{[k]} - \EE{Y})^2}}.
	\end{align*}
	We use Proposition~\ref{prop:pythag} for $A = Y$, $B = Y_{[k]}$, $C = \EE{Y}$ in the second and third steps, and also for $A = Y$, $B = Y_{[k]}$, $C = Z$ in the second step.
\end{proof}

\begin{proof}[Proof of Theorem~\ref{thm:known_prior_positive}]
Let $d > 0$ and let $X := \frac{1}{k} \sum_i Y_i + (d - 1) \parens{\frac{1}{k} \sum_i Y_i - \EE{Y}}$. As with the proof of Theorem~\ref{thm:prior_free_positive}, we start by upper bounding $\EE{(Y_{[k]} - X)^2}$. We have
\begin{align*}
	\EE{(Y_{[k]} - X)^2} &= \EE{\parens{d \parens{Y_{[k]} - \frac{1}{k} \sum_i Y_i} - (d - 1) \parens{Y_{[k]} - \EE{Y}}}^2}\\
	&= d^2 \EE{\parens{Y_{[k]} - \frac{1}{k} \sum_i Y_i}^2} + (d - 1)^2 \EE{(Y_{[k]} - \EE{Y})^2}\\
	&\qquad - \frac{2d(d - 1)}{k} \EE{\parens{k Y_{[k]} - \sum_i Y_i}\parens{Y_{[k]} - \EE{Y}}}\\
	&= d^2 \parens{\frac{1}{k} \sum_i \EE{(Y_{[k]} - Y_i)^2} - \frac{1}{k^2} \sum_{1 \le i < j \le k} \EE{(Y_i - Y_j)^2}}\\
	&\qquad + (d - 1)^2 \EE{(Y_{[k]} - \EE{Y})^2} - \frac{2d(d - 1)}{k} \sum_i \EE{(Y_{[k]} - Y_i)^2}.
\end{align*}
In the last step, we adapt the first term using Equation~\ref{eq:formula_one} and adapt the last term by observing that
\begin{align*}
	\EE{(Y_{[k]} - Y_i)(Y_{[k]} - \EE{Y})} &= \EE{Y_{[k]}(Y_{[k]} - Y_i)} - \EE{Y}\EE{Y_{[k]} - Y_i}\\
	&= \EE{Y_{[k]}(Y_{[k]} - Y_i)} = \EE{(Y_{[k]} - Y_i)^2}
\end{align*}
(where the last step holds because for any given $Y_i$, $\EE{Y_{[k]} \mid Y_i} = Y_i$, so $\EE{Y_i(Y_{[k]} - Y_i)} = 0$). Grouping like terms, we have
\[\EE{(Y_{[k]} - X)^2} = (d - 1)^2 \EE{(Y_{[k]} - \EE{Y})^2} - \frac{d(d - 2)}{k} \sum_i \EE{(Y_{[k]} - Y_i)^2} - \frac{d^2}{k^2} \sum_{1 \le i < j \le k} \EE{(Y_i - Y_j)^2}.\]
Now, recall Lemma~\ref{lem:ab}. Consider any $a, b \ge 0$ satisfying $b \ge \frac{(2a - 1)^2}{4a}$; then for all $i, j$ we have
\[\EE{(Y_i - Y_j)^2} \ge a \parens{\EE{(Y_{\{i,j\}} - Y_i)^2} + \EE{(Y_{\{i,j\}} - Y_j)^2}} - b \parens{\EE{(Y_i - \EE{Y})^2} + \EE{(Y_j - \EE{Y})^2}}.\]
Therefore we have
\begin{align*}
	&\EE{(Y_{[k]} - X)^2} \le (d - 1)^2 \EE{(Y_{[k]} - \EE{Y})^2} - \frac{d(d - 2)}{k} \sum_i \EE{(Y_{[k]} - Y_i)^2}\\
	&\qquad - \frac{d^2}{k^2} \sum_{1 \le i < j \le k} \parens{a \parens{\EE{(Y_{\{i,j\}} - Y_i)^2} + \EE{(Y_{\{i,j\}} - Y_j)^2}} - b \parens{\EE{(Y_i - \EE{Y})^2} + \EE{(Y_j - \EE{Y})^2}}}\\
	&= \parens{(d - 1)^2 + \frac{bd^2(k - 1)}{k}} \EE{(Y_{[k]} - \EE{Y})^2} - \parens{\frac{d(d - 2)}{k} + \frac{bd^2(k - 1)}{k^2}} \sum_i \EE{(Y_{[k]} - Y_i)^2}\\
	&\qquad - \frac{ad^2}{k^2} \sum_{1 \le i < j \le k} \parens{\EE{(Y_{\{i,j\}} - Y_i)^2} + \EE{(Y_{\{i,j\}} - Y_j)^2}},
\end{align*}
where in the last step we use the Pythagorean theorem to write $\EE{(Y_i - \EE{Y})^2}$ as $\EE{(Y - \EE{Y})^2} - \EE{(Y - Y_i)^2}$. Now we use Equation~\ref{eq:weak_subs_use}:
\begin{align*}
	\EE{(Y_{[k]} - X)^2} &\le \parens{(d - 1)^2 + \frac{bd^2(k - 1)}{k}} \EE{(Y_{[k]} - \EE{Y})^2}\\
	&\qquad - \parens{\frac{d(d - 2)}{k} + \frac{bd^2(k - 1)}{k^2} + \frac{ad^2}{k^2}} \sum_i \EE{(Y_{[k]} - Y_i)^2}.
\end{align*}
Now, supposing that $\frac{d(d - 2)}{k} + \frac{bd^2(k - 1)}{k^2} + \frac{ad^2}{k^2}$ is not positive, we may use Equation~\ref{eq:conditional_step} to obtain:
\begin{align*}
	\EE{(Y_{[k]} - X)^2} &\le \parens{(d - 1)^2 + \frac{bd^2(k - 1)}{k} - \frac{k - 1}{k} \parens{d(d - 2) + \frac{bd^2(k - 1)}{k} + \frac{ad^2}{k}}} \EE{(Y_{[k]} - \EE{Y})^2}\\
	&= \parens{1 + \frac{d^2 - 2d}{k} - (a - b) \frac{k - 1}{k^2} d^2} \EE{(Y_{[k]} - \EE{Y})^2}.
\end{align*}
With $d$ held fixed, our goal is to maximize $a - b$, just as in the proof of Theorem~\ref{thm:prior_free_positive}. This time, our constraints are $b \ge \frac{(2a - 1)^2}{4a}$ (as before) and $\frac{d(d - 2)}{k} + \frac{bd^2(k - 1)}{k^2} + \frac{ad^2}{k^2} \le 0$, which can be rewritten as $a + b(k - 1) \le \frac{2 - d}{d} k$. The optimal values are
\[a = \frac{\frac{2}{d}k - 1 + \sqrt{\parens{\frac{2}{d}k - 1}^2 - k(k - 1)}}{2k} \text{ and } b = \frac{(2a - 1)^2}{4a}.\]
Now, let $a$ and $b$ be as above. We may select $d$ as we please and seek to minimize the expression
\[1 + \frac{d^2 - 2d}{k} - (a - b) \frac{k - 1}{k^2} d^2.\]
We choose the value of $d$ in the theorem statement (which one can verify is optimal using a computer algebra system). This yields the desired approximation ratio.
\end{proof}
\end{document}